\pgfplotsset{compat=1.3}
\newtheorem{theorem}{Theorem}
\newtheorem{lemma}{Lemma}
\theoremstyle{definition}
\newtheorem{definition}{Definition}
\newcommand{\multiline}[1]{%
  \begin{tabularx}{\dimexpr\linewidth-\ALG@thistlm}[t]{@{}X@{}}
    #1
  \end{tabularx}
}
\DeclarePairedDelimiter{\ceil}{\lceil}{\rceil}
\DeclareMathOperator{\im}{im}
\DeclareMathOperator{\St}{St}
\DeclareMathOperator{\usetop}{\uparrow}
\DeclareMathOperator{\dsetop}{\downarrow}
\DeclareMathOperator{\ddiam}{diam}
\newcommand{\BigO}[1]{\ensuremath{O\left(#1\right)}}
\newcommand{\reg}[1]{\mathcal{R}\left(#1\right)}
\newcommand{\tri}[2]{\Delta_{#1}\left(#2\right)}
\newcommand{\dist}[2]{d\left(#1,#2\right)}
\newcommand{\diam}[1]{\ddiam(#1)}
\newcommand{\dmond}[1]{\lozenge\left(#1\right)}
\newcommand{\future}[1]{\usetop(#1)}
\newcommand{\past}[1]{\dsetop(#1)}
\newcommand{\cpath}[2]{(#1\updownarrow#2)}
\newcommand{\rest}[1]{|_{#1}}
\newcommand\change[2][]{#2}
\newcommand\todo[1]{#1}
\begin{document}

\title{Cellular automaton decoders for topological quantum codes \\with noisy measurements and beyond}
\author{Michael Vasmer}
\email{mvasmer@perimeterinstitute.ca}
\affiliation{Department of Physics and Astronomy, University College London, Gower Street, London, WC1E 6BT}
\affiliation{Perimeter Institute for Theoretical Physics, Waterloo, ON N2L 2Y5, Canada}
\affiliation{Institute for Quantum Computing, University of Waterloo, Waterloo, ON N2L 3G1, Canada}
\author{Dan E. Browne}
\affiliation{Department of Physics and Astronomy, University College London, Gower Street, London, WC1E 6BT}
\author{Aleksander Kubica}
\email{akubica@perimeterinstitute.ca}
\affiliation{Perimeter Institute for Theoretical Physics, Waterloo, ON N2L 2Y5, Canada}
\affiliation{Institute for Quantum Computing, University of Waterloo, Waterloo, ON N2L 3G1, Canada}
\date{\today}

\begin{abstract}
    We propose an error correction procedure based on a cellular automaton, the sweep rule, which is applicable to a broad range of codes beyond topological quantum codes.
    For simplicity, however, we focus on the three-dimensional (3D) toric code on the rhombic dodecahedral lattice with boundaries and prove that the resulting local decoder has a non-zero error threshold.
    We also numerically benchmark the performance of the decoder in the setting with measurement errors using various noise models.
    We find that this error correction procedure is remarkably robust against measurement errors and is also essentially insensitive to the details of the lattice and noise model. Our work constitutes a step towards finding simple and high-performance decoding strategies for a wide range of quantum low-density parity-check codes. 
\end{abstract}

\maketitle

\section{Introduction \label{sec:intro}}

Developing and optimizing decoders for quantum error-correcting codes is an essential task on the road towards building a fault-tolerant quantum computer~\cite{Shor1996, Preskill1998, Campbell2017}.
A decoder is a classical algorithm that outputs a correction operator, given an error syndrome, i.e.\ a list of measurement outcomes of parity-check operators. 
The error threshold of a decoder tells us the maximum error rate that the code (and hence an architecture based on the code) can tolerate. 
Moreover, decoder performance has a direct impact on the resource requirements of fault-tolerant quantum computation~\cite{Fowler2012}.
In addition, the runtime of a decoder has a large bearing on the clock speed of a quantum computer and may be the most significant bottleneck in some architectures~\cite{Terhal2015, Das2020}.

Here, we focus on decoders for CSS stabilizer codes~\cite{Gottesman1997, Calderbank1996}, in particular topological quantum codes~\cite{Kitaev2003, Dennis2002, Bombin2013book, Brown2014}, which have desirable properties such as high error-correction thresholds and low-weight stabilizer generators.
Recently, there has been renewed interest in $d$-dimensional topological codes, where $d\geq 3$, because they have more powerful logical gates than their two-dimensional counterparts~\cite{Bombin2007, Bombin2013, Kubica2015a, Kubica2015, Webster2019, Vasmer2019, Bravyi2013a, Pastawski2014, Jochym-OConnor2018} and are naturally suited to networked ~\cite{Barrett2005,Nickerson2014,Monroe2014,Kalb2017} and ballistic linear optical architectures~\cite{Kieling2007,Raussendorf2006,Gimeno-Segovia2015,Rudolph2017,Nickerson2018}. 
In addition, it has recently been proposed that one could utilize the power of three-dimensional (3D) topological codes using a two-dimensional layer of active qubits~\cite{Bombin2018,Browneaay4929}.

\floatsetup[figure]{style=plain,subcapbesideposition=top}
\begin{figure}[!ht]
    \centering
    \includegraphics[width=0.9\textwidth]{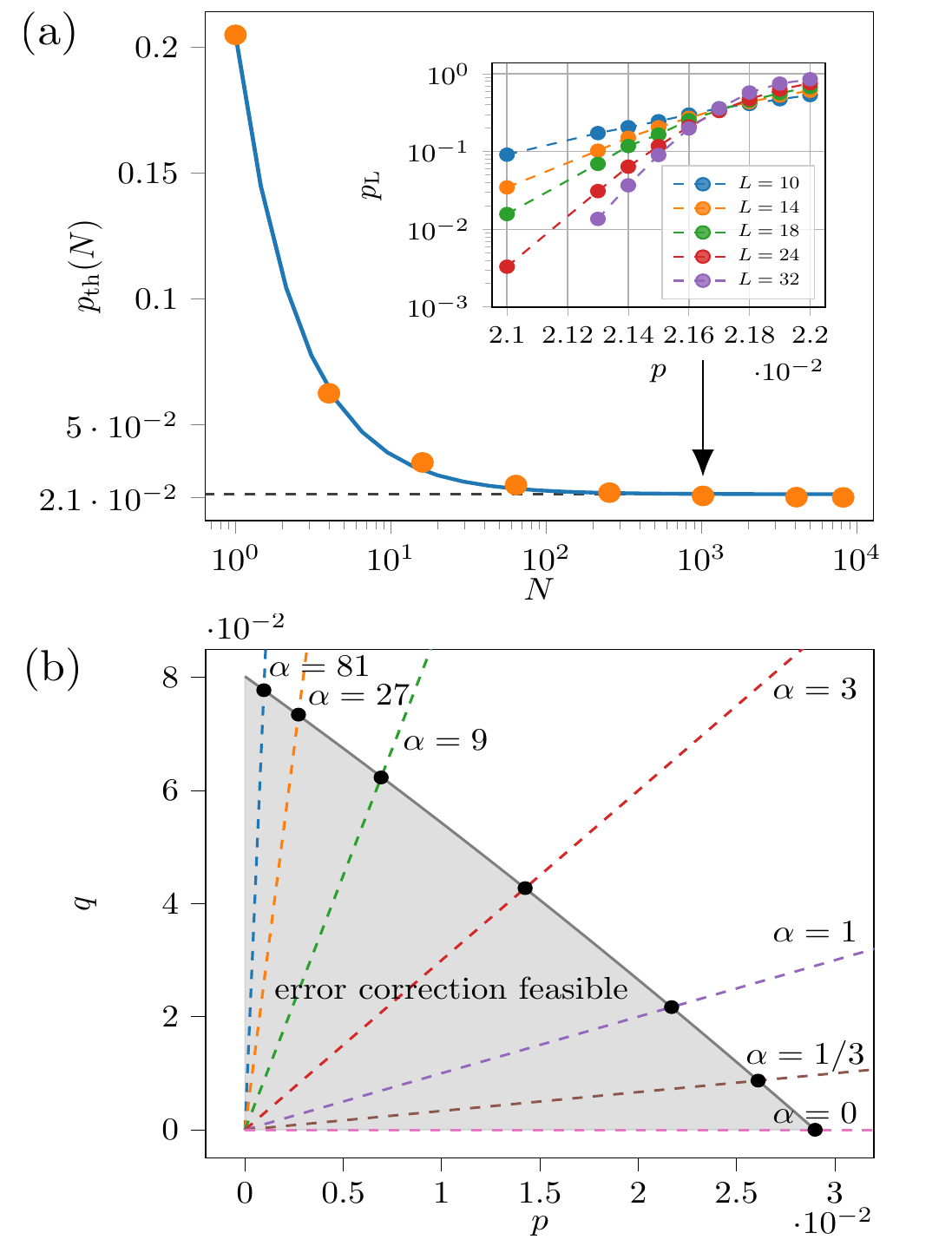}
    \caption{Numerical error threshold estimates for the sweep decoder applied to the toric code on the rhombic dodecahedral lattice. In (a), we plot the error threshold $p_{\mathrm{th}}(N)$ as function of the number of error-correction cycles $N$, for an error model with equal phase-flip ($p$) and measurement error ($q$) probabilities ($\alpha=q/p=1$). The inset shows the data for $N=2^{10}$, where we use $10^4$ Monte Carlo samples for each point. Using the ansatz in \cref{eq:sus-thresh}, we estimate the sustainable threshold to be $p_{\mathrm{sus}}\approx 2.1\%$. 
    In (b), we plot $p_{\mathrm{sus}}$ for error models with different values of $\alpha$, where we approximate $p_{\mathrm{sus}}\approx p_{\mathrm{th}}(2^{10})$. 
    }
    \label{fig:rhombic-toric}
\end{figure}

Cellular-automaton (CA) decoders for topological codes~\cite{Harrington2004, Pastawski2011, Herold2015, Herold2017, Breuckmann2017, Dauphinais2017, Kubica2019} are particularly attractive because they are local: at each vertex of the lattice we compute a correction using a simple rule that processes syndrome information in the neighbourhood of the vertex.
This is in contrast to more complicated decoders such as the minimum-weight perfect matching algorithm~\cite{Dennis2002}, which requires global processing of the entire syndrome to compute a correction.
Moreover, CA decoders have another advantage: they exhibit single-shot error correction~\cite{Bombin2015, Campbell2019}, i.e.\ it is not necessary to repeat stabilizer measurements to compensate for the effect of measurement errors.

In our work we study the recently proposed sweep decoder~\cite{Kubica2019}, a cellular-automaton decoder based on the sweep rule.
First, we adapt the sweep rule to an abstract setting of codes without geometric locality, which opens up the possibility of using the sweep decoder for \todo{certain low-density parity-check (LDPC) codes beyond topological codes.}
Second, we show how the sweep decoder can be used to decode \todo{phase-flip errors in} the 3D toric code on the rhombic dodecahedral lattice with boundaries, and we prove that it has a non-zero error threshold in this case. 
We remark that the original sweep decoder only works for the toric code defined on lattices without boundaries.
Third, we numerically simulate the performance of the decoder in the setting with measurement errors and further optimize its performance.
We use an independent and identically distributed (iid) error model with phase-flip probability $p$ and measurement error probability $q$.
We observe an error threshold of ${\sim}2.1\%$ when $q=p$, an error threshold of ${\sim}2.9\%$ when $q=0$ and an error threshold of ${\sim}8\%$ when $p\rightarrow 0^+$; see \cref{fig:rhombic-toric}. 
\todo{We note that the sweep rule decoder cannot be used to decode bit-flip errors in the 3D toric code, as their (point-like) syndromes do not have the necessary structure. However, 2D toric code decoders can be used for this purpose, e.g.~\cite{Dennis2002,Wang2011,Fowler2014,Delfosse2017,Bravyi2013,Duclos-Cianci2010,Duclos-Cianci2014}.}

We report that the sweep decoder has an impressive robustness against measurement errors and in general performs well in terms of an error-correction threshold.
To compare the sweep decoder with previous work~\cite{Kubica2018a,Breuckmann2018,Duivenvoorden2019,Aloshious2019}, we look at the 3D toric code on the cubic lattice, as decoding the 3D toric code on the rhombic dodecahedral lattice has not been studied before; see \cref{tab:pth_compare}.

\begin{table}[ht!]
    \centering
    \begin{ruledtabular}
	\begin{tabular}{lll}
        Decoder & $q=0$ & $q=p$ \\
        \hline
		Neural Network ~\cite{Breuckmann2018} & 17.5\% & N/A \\
		Renormalization Group (RG)~\cite{Duivenvoorden2019} & $17.2\%$ & $7.3\%$ \\
		Toom's Rule~~\cite{Kubica2018a} & $14.5\%$ & N/A \\
		Erasure Mapping (EM)~\cite{Aloshious2019} & $12.2\%$ & N/A \\
		\textbf{Sweep (with direction change)} & \textbf{15.5\%} & \textbf{1.7\%} \\
    \end{tabular}
    \end{ruledtabular}
	\caption{Comparison of the error thresholds of (cubic) toric code decoders against phase-flip noise with ($q=p$) and without ($q=0$) measurement errors. Our results are shown in bold.}
	\label{tab:pth_compare}
\end{table}

The remainder of this article is structured as follows.
In \cref{sec:results}, we start by presenting how the sweep rule can be used in an abstract setting of codes without geometric locality.
Then, we outline a proof of the non-zero error threshold of the sweep decoder for the 3D toric code on the rhombic dodecahedral lattice with boundaries.
We also present numerical simulations of the performance of the decoder in the setting with measurement errors for lattices with and without boundaries.
Next, in \cref{sec:discussion}, we discuss the applicability of the sweep decoder and suggest directions for further research.
Finally, in \cref{sec:methods}, we prove the properties of the sweep rule in the abstract setting, and analyze the case of lattices with boundaries.

\section{Results}
\label{sec:results}

We start this section by adapting the sweep rule~\cite{Kubica2019} to the setting of causal codes, which go beyond topological quantum codes.
Then, we focus on the 3D toric code on the rhombic dodecahedral lattice. 
We first analyze the case of the infinite lattice, followed by the case of lattices with boundaries.
We finish by presenting numerical simulations of the performance of various optimized versions of the sweep decoder.

\subsection{Sweep rule for causal codes}
\label{subsec:abstractSR}

\change[We now present a generalization of the sweep rule, which allows to decode CSS stabilizer codes~\cite{Gottesman1997,Calderbank1996,Steane1996} beyond topological quantum codes.]{}
Recall that a stabilizer code is CSS iff its stabilizer group can be generated by operators that consist exclusively of Pauli $X$ or Pauli $Z$ operators.
Let $\mathcal Q$ denote the set of physical qubits of the code and $\mathcal{S}$ be the set of all $X$ stabilizer generators, which are measured.
We refer to the stabilizers returning $-1$ outcome as the $X$-type syndrome.
The $X$-type syndrome constitutes the classical data needed to correct Pauli $Z$ errors.
In what follows, we focus on correcting $Z$ errors as $X$ errors are handled analogously.

We start by introducing a partially ordered set $V$ with a binary relation $\preceq$ over its elements.
We refer to the elements of $V$ as locations.
Given a subset of locations $U\subseteq V$, we say that a location $w\in V$ is an upper bound of $U$ and write $U \preceq w$ iff $u \preceq w$ for all $u \in U$; a lower bound of $U$ is defined similarly.
The supremum of $U$, denoted by $\sup U$, is the least upper bound of $U$, i.e., $\sup U \preceq w$ for each upper bound $w$ of $U$.
Similarly, the infimum $\inf U$ is the greatest lower bound of $U$.
We also define the future and past of $w\in V$ to be, respectively
\begin{eqnarray}
\future w &=& \{ W \subseteq V : w \preceq W\},\\
\past w &=& \{ W \subseteq V : W \preceq w\}.
\end{eqnarray}
We define the causal diamond
\begin{equation}
\dmond U = \future{\inf U} \cap \past{\sup U}
\end{equation}
of $U$ as the intersection of the future of $\inf U$ and the past of $\sup U$.
Lastly, for any $\mathcal A \subseteq 2^V$, where $2^V$ is the power set of $V$, we define the restriction of $\mathcal A$ to the location $v\in V$ as follows
\begin{equation}
\mathcal A\rest v = \{ A \in \mathcal A : A \ni v \}.
\label{eq:restrict}
\end{equation}
For notational convenience, we use the shorthands $\bigcup\mathcal A = \bigcup_{A\in\mathcal A} A$ and $\sup\mathcal A = \sup \bigcup\mathcal A$.

Let $C_{\mathcal Q}$ and $C_{\mathcal S}$ be $\mathbb{F}_2$-linear vector spaces with the sets of qubits $\mathcal Q$ and $X$ stabilizer generators $\mathcal S$ as bases, respectively.
Note that there is a one-to-one correspondence between vectors in $C_{\mathcal Q}$ and subsets of $\mathcal Q$, thus we treat them interchangeably; similarly for vectors in $C_{\mathcal S}$ and subsets of  $\mathcal S$.
Let $\partial: C_{\mathcal Q} \rightarrow C_{\mathcal S}$ be a linear map, called the boundary map, which for any Pauli $Z$ error with support $\epsilon\subseteq \mathcal Q$ returns its $X$-type syndrome $\sigma\subseteq \mathcal{S}$, i.e., $\sigma = \partial \epsilon$.
We say that a location $v\in V$ is trailing for $\sigma\in\im\partial$ iff $\sigma\rest v$ is nonempty and belongs to the future of $v$, i.e., $\sigma\rest v \subset \future v$.

Now, we proceed with defining a causal code.
We say that a quadruple $((V,\preceq), \mathcal Q, \mathcal S, \partial)$ describes a causal code iff the following conditions are satisfied.
\begin{enumerate}
\item (causal diamonds) For any finite subset of locations $U\subseteq V$ there exists the causal diamond $\dmond U$.
\item (locations) Every qubit $Q\in\mathcal Q$ and every stabilizer generator $S\in\mathcal S$ correspond to finite subsets of locations, i.e., $Q,S \subseteq V$.
\item (qubit infimum) For every qubit $Q\in\mathcal Q$ its infimum satisfies $\inf Q \in Q$.
\item (syndrome evaluation) The syndrome $\partial \epsilon$ of any error $\epsilon \subseteq \mathcal Q$ can be evaluated locally, i.e.,
\begin{equation}
\forall v\in V: (\partial \epsilon)\rest v = [\partial(\epsilon\rest v)]\rest v.
\end{equation}
\item (trailing location) For any location $v \in V$ and the syndrome $\sigma \in \im \partial$,
if $\sigma\rest v$ is nonempty and  $\sigma\rest v \subset\future v$, then there exists a subset of qubits $\varphi(v) \subseteq \mathcal Q \rest v \cap \future v$ satisfying
$[\partial \varphi (v)]\rest v = \sigma\rest v$ and $\dmond{\varphi(v)} = \dmond{\sigma\rest v}$.
\end{enumerate}

We can adapt the sweep rule to any causal code $((V,\preceq), \mathcal Q, \mathcal S, \partial)$.
We define the sweep rule for every location $v\in V$ in the same way as in Ref.~\cite{Kubica2019}.
\begin{definition}[sweep rule]
If $v$ is trailing, then find a subset of qubits $\varphi(v)\subseteq \mathcal{Q}\rest v \cap \future{v}$ with a boundary that locally matches $\sigma$, i.e.\ $[\partial\varphi(v)]\rest v=\sigma \rest v$. Return $\varphi(v)$.
\label{def:sweep}
\end{definition}

Our first result is a lemma concerning the properties of the sweep rule.
But, before we state the lemma, we must make some additional definitions.
Let $u,v\in V$ be two locations satisfying $u\prec v$.
We say that a sequence of locations $u\prec w_{1}\prec\ldots\prec w_{n}\prec v$, where $n=0,1,\ldots$, forms a chain between $u$ and $v$ of length $n+1$.
We define $\mathcal N (u,v)$ to be the collection of all the chains between $u$ and $v$.
We write $\ell (N)$ to denote the length of the chain $N\in\mathcal N (u,v)$.

Given a causal code $((V,\preceq), \mathcal Q, \mathcal S, \partial)$, we define its corresponding syndrome graph $G$ as follows.
For each stabilizer $S \in \mathcal S$, there is a node in $G$ and we add an edge between any two nodes iff their corresponding stabilizers both have a non-zero intersection with the same qubit in $\mathcal Q$.
We define the syndrome distance $d_G(S, T)$ between any two stabilizer generators $S,T\in \mathcal S$ to be the graph distance in $G$, i.e., the length of the shortest path in $G$ between the nodes corresponding to $S$ and $T$.
This can be extended to the syndromes $\sigma,\tau \subseteq \mathcal S$ in the obvious way: $d_G(\sigma, \tau) = \min_{S\in\sigma,T\in\tau}d_G(S, T)$.
 
\begin{lemma}[sweep rule properties]
Let $\sigma\in\im\partial$ be a syndrome of the causal code $((V,\preceq), \mathcal Q, \mathcal S, \partial)$.
Suppose that the sweep rule is applied simultaneously at every location in $V$ at time steps $T=1,2,\ldots$, and the syndrome is updated at each time step as follows: $\sigma^{(T+1)}=\sigma^{(T)}+\partial \varphi^{(T)}$, where $\varphi^{(T)}$ is the set of qubits returned by the rule. 
Then,
    \begin{enumerate}
        \item (support) the syndrome at time $T$, $\sigma^{(T)}$ stays within the causal diamond of the original syndrome, i.e.\
        \begin{equation}
            \sigma^{(T)}\subseteq\dmond{\sigma},
            \label{eq:lem-supp}
        \end{equation}
        \item (propagation) the syndrome distance between $\sigma$ and any $S\in \sigma^{(T)}$ is at most $T$, i.e.\
        \begin{equation}
            d_G(S,\sigma)\leq T,
            \label{eq:lem-prop}
        \end{equation}
        \item (removal) the syndrome is trivial, i.e.\ $\sigma^{(T)}=0$, for 
        \begin{equation}
            T > \max_{v \in \bigcup \sigma}\max_{N \in \mathcal N (v, \sup \sigma)}\ell(N).
        \end{equation}
        \todo{That is, the syndrome is trivial for times $T$ greater than the maximal chain length between a location $v$ and the supremum of the syndrome $\sup \sigma$, maximized over all locations $v$ contained in the syndrome (viewed as a subset of locations).}
    \end{enumerate}
    \label{lem:sr-props}
\end{lemma}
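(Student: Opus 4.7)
My plan is to prove the three assertions in order: the first two by induction on $T$, and the third, which is the main obstacle, by a potential-function argument.

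For the support statement \cref{eq:lem-supp}, I would induct on $T$ with $\sigma^{(1)}=\sigma\subseteq\dmond{\sigma}$ as the base. At each trailing $v$, the trailing-location condition gives $\dmond{\varphi(v)}=\dmond{\sigma^{(T)}\rest v}$, and the inductive hypothesis $\sigma^{(T)}\rest v\subseteq\dmond{\sigma}$ together with the monotonicity of $\inf$ and $\sup$ yields $\dmond{\varphi(v)}\subseteq\dmond{\sigma}$. The locality of $\partial$ encoded in the syndrome-evaluation condition then places $\partial\varphi(v)$, and hence $\partial\varphi^{(T)}=\sum_v\partial\varphi(v)$, inside $\dmond{\sigma}$, so $\sigma^{(T+1)}\subseteq\dmond{\sigma}$.

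For the propagation statement \cref{eq:lem-prop}, I would again induct, proving the sharper one-step claim $d_G(S,\sigma^{(T)})\leq 1$ for any $S\in\sigma^{(T+1)}\setminus\sigma^{(T)}$. Such an $S$ is toggled by some $\partial\varphi(v)$ and hence shares a qubit in $\varphi(v)$ with some $S^\star\in[\partial\varphi(v)]\rest v=\sigma^{(T)}\rest v$, making $S$ and $S^\star$ adjacent in the syndrome graph $G$. Combining with the inductive bound $d_G(\sigma^{(T)},\sigma)\leq T-1$ yields $d_G(S,\sigma)\leq T$.

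For the removal statement I would introduce the potential
\[
h(\tau)=\max_{v\in\bigcup\tau}\max_{N\in\mathcal N(v,\sup\sigma)}\ell(N),
\]
well defined by the support property, and aim to show $h(\sigma^{(T+1)})<h(\sigma^{(T)})$ as long as $\sigma^{(T)}\neq 0$. A location $v$ realizing this maximum should be trailing: otherwise a stabilizer $S\in\sigma^{(T)}\rest v$ containing some $u\prec v$ would yield a chain through $u$ of strictly greater length, contradicting the choice of $v$. Being trailing, $v$ is acted upon by the rule, and by the trailing-location condition $\partial\varphi(v)$ cancels $\sigma^{(T)}\rest v$ while introducing stabilizers whose locations lie strictly above $v$ and hence admit strictly shorter maximal chains to $\sup\sigma$. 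Iterating then drives $h$ to zero after at most $\max_{v\in\bigcup\sigma}\max_{N\in\mathcal N(v,\sup\sigma)}\ell(N)$ steps.

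The main obstacle will be two-fold. First, establishing that every maximum-height location is trailing requires ruling out stabilizer locations that are incomparable with $v$, which should follow from the structure of how qubits and stabilizers embed in $V$ through the qubit-infimum and syndrome-evaluation conditions. Second, because the rule is applied simultaneously at all trailing locations, one must verify that contributions $\partial\varphi(u)$ from distinct trailing $u\neq v$ cannot cancel so as to reintroduce a stabilizer at the maximum height; this should again follow from the locality in the syndrome-evaluation condition together with the causal-diamond identity $\dmond{\varphi(v)}=\dmond{\sigma^{(T)}\rest v}$.
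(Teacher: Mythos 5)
Your plan matches the paper's proof essentially step for step: the same inductions for the support property (via monotonicity of causal diamonds under inclusion) and the propagation property (a new syndrome element shares a qubit with an old one, hence is adjacent in $G$), and the same potential $f_{\sigma}(T)=\max_{v\in\bigcup\sigma^{(T)}}\max_{N\in\mathcal N(v,\sup\sigma)}\ell(N)$ shown to strictly decrease for the removal property. Of the two obstacles you flag, the second (no cancellation from simultaneous applications) is resolved in the paper by the qubit-infimum condition --- a qubit in $\mathcal Q\rest v\cap\future v$ has $v$ as its infimum, so each qubit is returned at most once per time step --- while the first (ruling out incomparable locations when arguing the maximizer is trailing) is simply asserted without proof in the paper, so your treatment is no less complete than the original.
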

The above lemma is analogous to Lemma 2 from Ref.~\cite{Kubica2019}, but \change[it now applies to more general causal codes]{is not limited to codes defined on geometric lattices; rather, it now applies to more general causal codes.} We defer the proof until \cref{sec:methods}.

\subsection{Rhombic dodecahedral toric codes}
\label{subsec:rhombic}

In this section, we examine an example of a causal code: the 3D toric code defined on the infinite rhombic dodecahedral lattice. Toric codes defined on this lattice are of interest because they arise in the transversal implementation of $CCZ$ in 3D toric codes~\cite{Kubica2015,Vasmer2019,Browneaay4929}. Consider the tessellation of $\mathbb R^3$ by rhombic dodecahedra, where a rhombic dodecahedron is a face-transistive polyhedron with twelve rhombic faces. One can construct this lattice from the cubic lattice, as follows. Begin with a cubic lattice, where the vertices of the lattice are the elements of $\mathbb Z^3$. Create new vertices at all half-integer coordinates $(x/2,y/2,z/2)$, satisfying $x+y+z=1 \mod 4$ and $xyz = 1 \mod 2$. These new vertices sit at the centres of half of all cubes in the cubic lattice. For each such cube, add edges from the new vertex at its centre to the vertices of the cube. Finally, delete the edges of the cubic lattice. The remaining lattice is a rhombic dodecahedral lattice. \Cref{fig:rb-lat} gives an example of this procedure.

\begin{figure*}
    \centering
    \subfloat[]{
        \centering
        \includegraphics[width=0.3\textwidth]{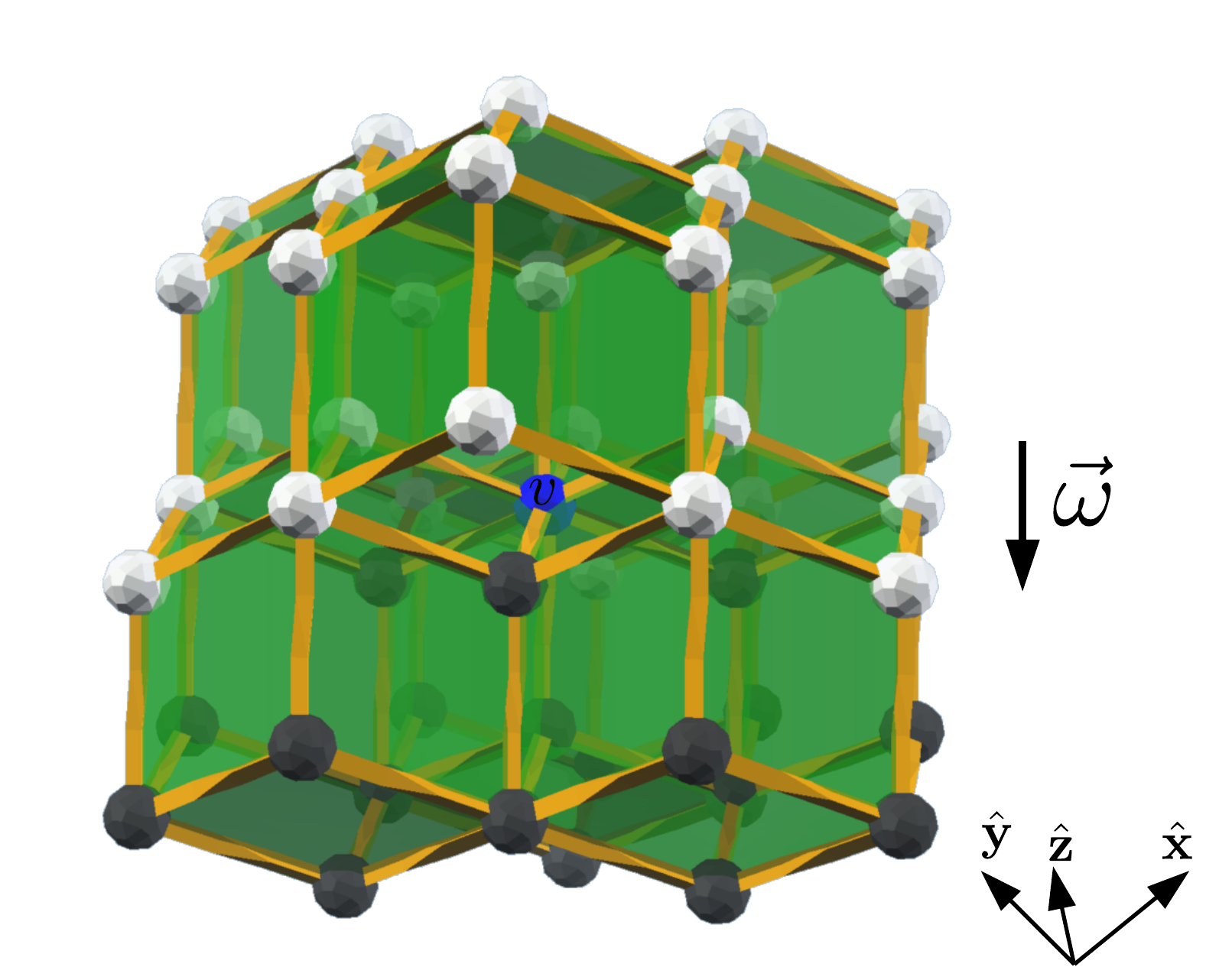}
    }
    \hfill
    \subfloat[]{
        \centering
        \includegraphics[width=0.3\textwidth]{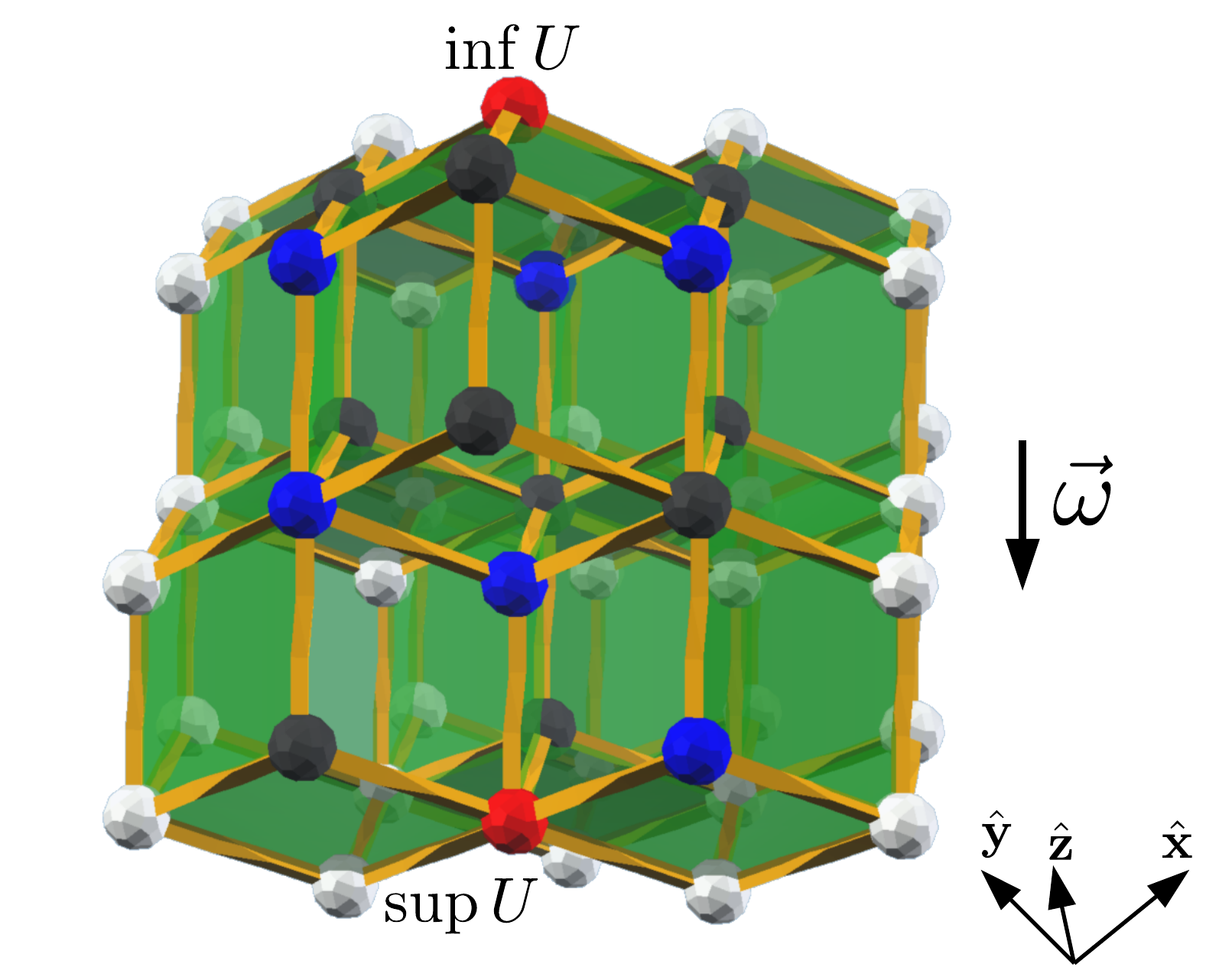}
    }
    \hfill
    \subfloat[]{
        \label{subfig:sr}
        \centering
        \includegraphics[width=0.3\textwidth]{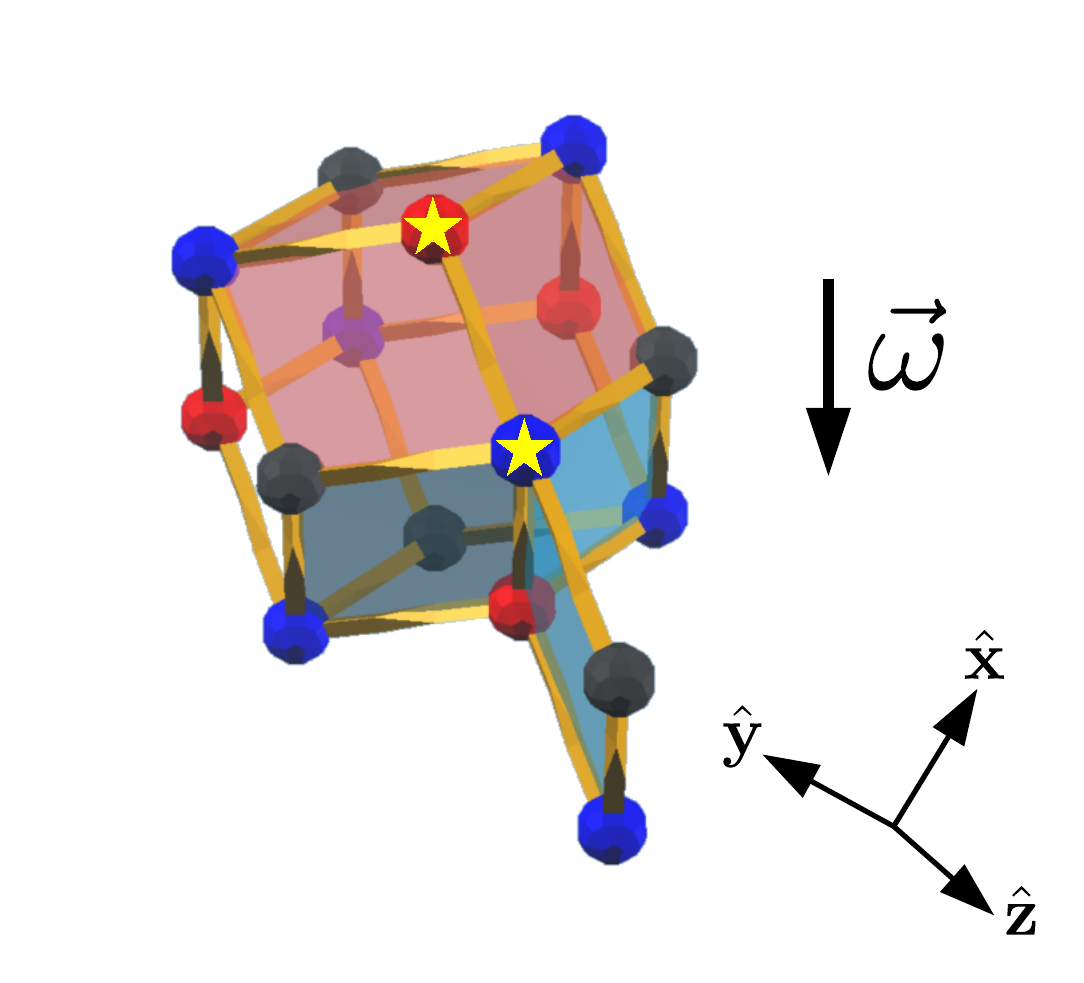}
    }
    \caption{The sweep rule in the rhombic dodecahedral lattice. The sweep direction $\vec \omega = -(1,1,1)$ is indicated by the arrow. (a) The future $\future v$ of $v$ (the blue vertex), is the power set of the black vertices and the blue vertex. (b) The causal diamond of $U$, where $U$ is the set of blue vertices. The red vertices are $\sup U$ and $\inf U$, and $\dmond U$ is the power set of the red, black and blue vertices. (c) In the rhombic dodecahedral lattice, there are two types of vertices: one type is degree four (red and black) and the other is degree eight (blue). The red (blue) shaded faces are the qubits that the rule may return, depending on the syndrome at the highlighted red (blue) vertex. The rule returns nothing at the black vertices because there are no syndromes whose restriction to a black vertex is in the future of that vertex.}
    \label{fig:sr-rd}
\end{figure*}

\todo{We denote the infinite rhombic dodecahedral lattice by $\mathcal L^\infty$, and we denote its vertices, edges, faces, and cells by $\mathcal L^\infty_0$, $\mathcal L^\infty_1$, $\mathcal L^\infty_2$, and $\mathcal L^{\infty}_3$ respectively. We place qubits on faces, and we associate $X$ and $Z$ stabilizer generators with edges and cells, respectively. That is, for each edge $e \in \mathcal L^{\infty}_1$, we have a stabilizer generator $\Pi_{f : e \in f} X_f$, and for each cell $c \in \mathcal L^{\infty}_3$, we have a stabilizer generator $\Pi_{f \in c} Z_f$, where $X_f$ ($Z_f$) denotes a Pauli $X$ ($Z$) operator acting on the qubit on face $f$.} In the notation of the previous section, $V= \mathcal L^\infty_0$, $\mathcal{S} = \mathcal L^\infty_1$, and $\mathcal Q = \mathcal L^\infty_2$. Let $\vec \omega \in \mathbb R^3$ be a vector (a sweep direction) that is not perpendicular to any of the edges of $\mathcal L^\infty$. Such a sweep direction induces a partial order over $\mathcal L^\infty_0$, as we now explain. Let $(u:v)$ denote a path from one vertex $v \in \mathcal L^\infty_0$ to another vertex $u \in \mathcal L^\infty_0$, where $(u:v)=\{ (u, w_{1}), \ldots, (w_{n}, v) \}$ is a set of edges. We call a path from $u$ to $v$ causal (denoted by $\cpath{u}{v}$) if the inner product $\vec \omega \cdot (w_{i}, w_{i+1})$ has the same sign for all edges $(w_{i}, w_{i+1}) \in \cpath{u}{v}$. We write $u \preceq v$, if $u=v$ or there exists a causal path, $\cpath{u}{v}$, and $\vec \omega \cdot (w_{i},w_{i+1}) > 0$ for all edges in the path. 

We now verify that the rhombic dodecahedral toric code equipped with a sweep direction is a causal code. \todo{We note that this is only with respect to phase-flip errors, as for bit-flip errors the trailing location condition is not satisfied. Higher-dimensional toric codes can satisfy the causal code conditions for both bit-flip and phase-flip errors, e.g.\ the 4D toric code with qubits on faces~\cite{Dennis2002,Alicki2010}.} We choose a sweep direction that is parallel to one of the edges directions of the lattice, $\vec \omega = (1,1,1)$. First, consider the causal diamond condition. One can prove by induction that any finite subset of vertices of the lattice has an infimum and supremum, and therefore has a unique causal diamond. \Cref{fig:sr-rd} shows an example of the future of a vertex and the causal diamond of a subset of vertices. By definition, the qubits and stabilizer generators are associated with faces and edges, which are finite subsets of locations (vertices), so the locations condition is satisfied. Next, consider the qubit infimum conidition. The faces of the lattice are rhombi, and as $\vec \omega$ is not perpendicular to any of the edges of the lattice, each face contains its infimum. 

Next, consider the syndrome evaluation condition. We consider vector spaces $C_1$ and $C_2$  with bases given by $e \in \mathcal L^\infty_1$ and $f \in \mathcal L^\infty_2$, respectively. This allows us to define the linear boundary operator $\partial : C_2 \rightarrow C_1$, which is specified for all basis elements $f \in C_2$ as follows: $\partial f = \sum_{e \in f} e$. In words, for each face, $\partial$ returns the sum of the edges in the face. The syndrome of a \todo{(phase-flip)} error $\epsilon \subseteq \mathcal L^\infty_2$ is then $\partial \epsilon$. This syndrome evaluation procedure is local. 

Now, we consider the final causal code condition: the trailing location condition. The rhombic dodecahedral lattice has two types of vertex: one type is degree four and the other is degree eight. Given our choice of sweep direction, one can easily verify that the trailing location condition is satisfied at all the vertices of $\mathcal L^\infty$, as illustrated in \cref{subfig:sr}.

\subsection{Sweep decoder for toric codes with boundaries}
\label{subsec:boundaries}

In this section, we consider the extension of the sweep rule to 3D toric codes defined on a family of rhombic dodecahedral lattices $\mathcal L$ with boundaries, with growing linear size $L$. We discuss the problems that occur when we try to apply the standard sweep rule to this lattice. Then, we present a solution to these problems in the form of a modified sweep decoder. 

We construct a family of rhombic dodecahedral lattices with boundaries from the infinite rhombic dodecahedral lattice, by considering finite regions of it. We start with a cubic lattice of size $(L-1)\times(L+1)\times L$, with vertices at integer coordinates $(x,y,z) \in [0, L-1] \times [0, L+1] \times [0, L]$. To construct a rhombic dodecahedral lattice, we create vertices at the centre of cubes with coordinates $(x/2,y/2,z/2)$ satisfying $x+y+z=1\mod 4$ and $xyz=1\mod 2$. Then for each vertex at the centre of a cube, we create edges from this vertex to all the vertices $(x,y,z)$ of its cube satisfying $0 < y < L+1$ and $0 < z < L$. Finally, we delete the edges of the cubic lattice and all the vertices $(x,y,z)$ of the cubic lattice with $y=0,L+1$ or $z = 0,L$. \Cref{fig:rb-lat} illustrates the construction of the rhombic dodecahedral lattice $\mathcal L$ for $L=3$. 

\begin{figure*}
    \centering
    \subfloat[]{
        \centering
        \includegraphics[width=0.27\textwidth]{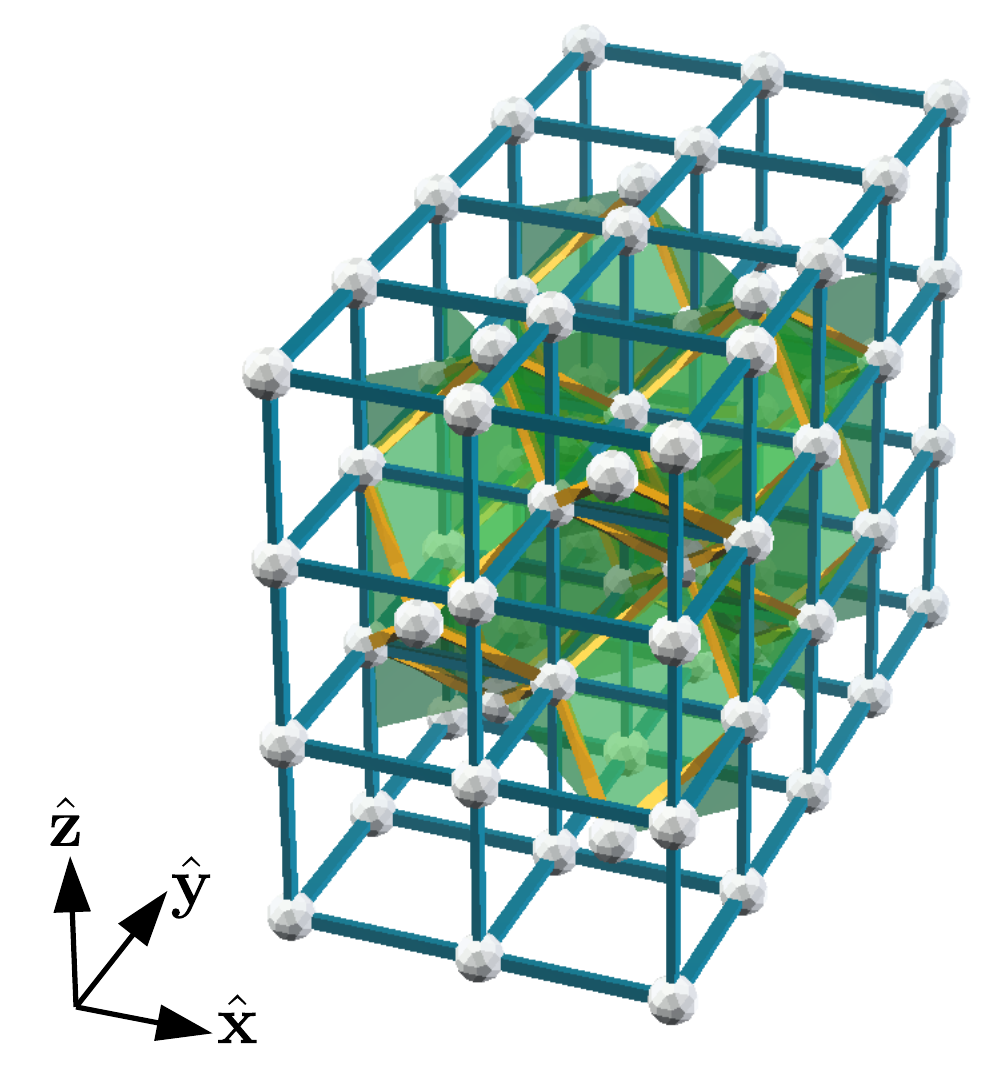}
    }
    \hfill
    \subfloat[]{
        \label{subfig:rs}
        \centering
        \includegraphics[width=0.22\textwidth]{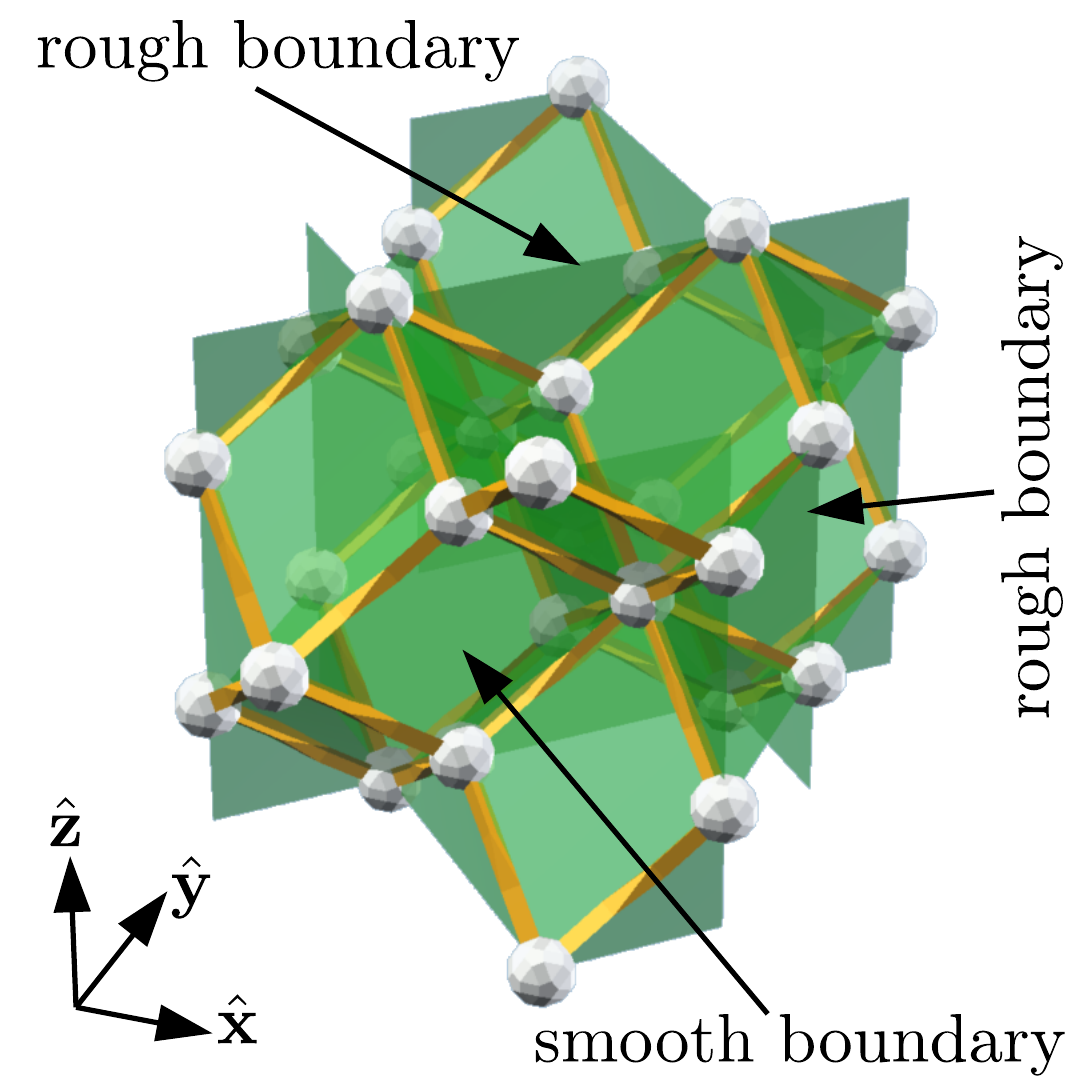}
    }
    \hfill
    \subfloat[]{
        \centering
        \includegraphics[width=0.22\textwidth]{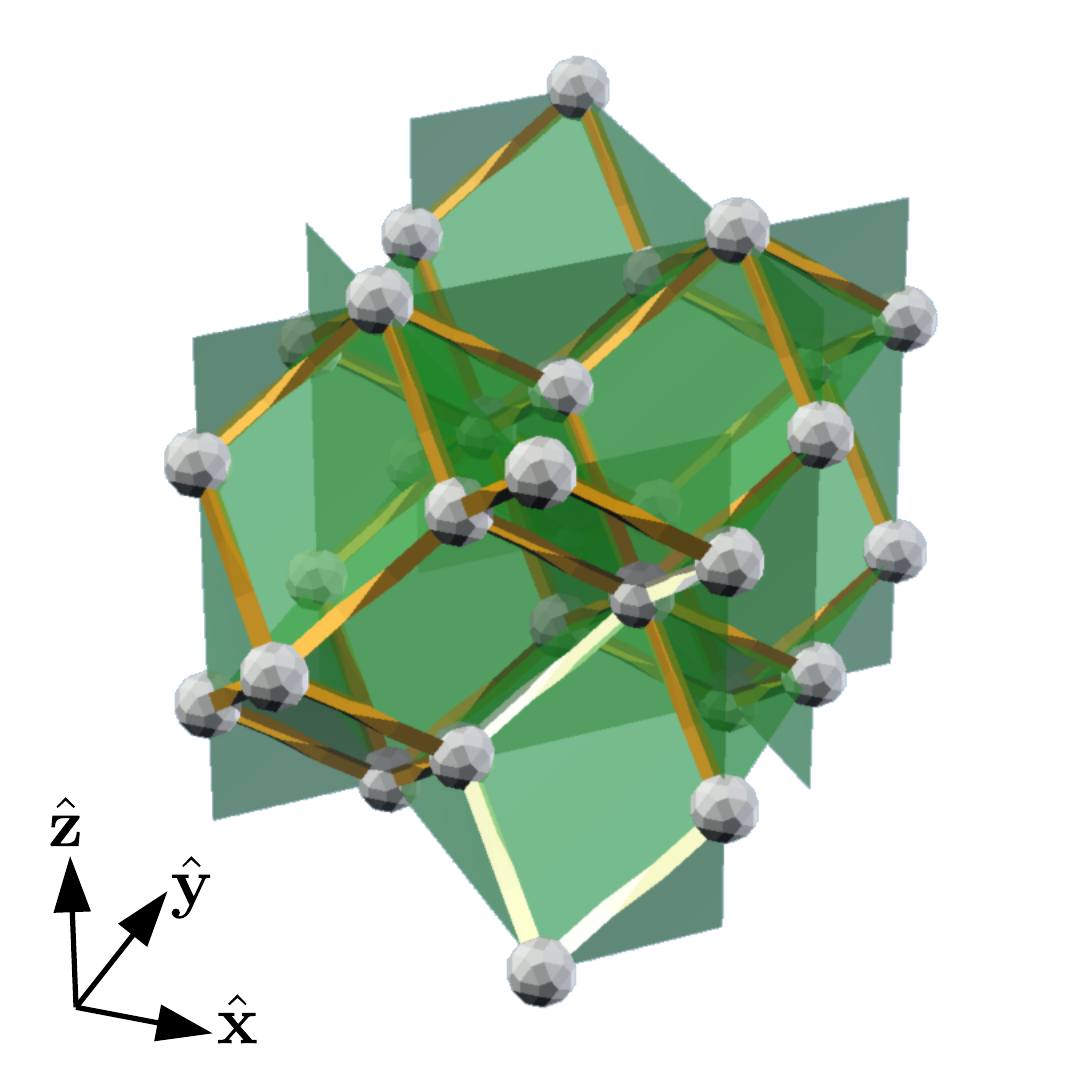}
    }
    \hfill
    \subfloat[]{
        \label{subfig:smooth}
        \centering
        \includegraphics[width=0.22\textwidth]{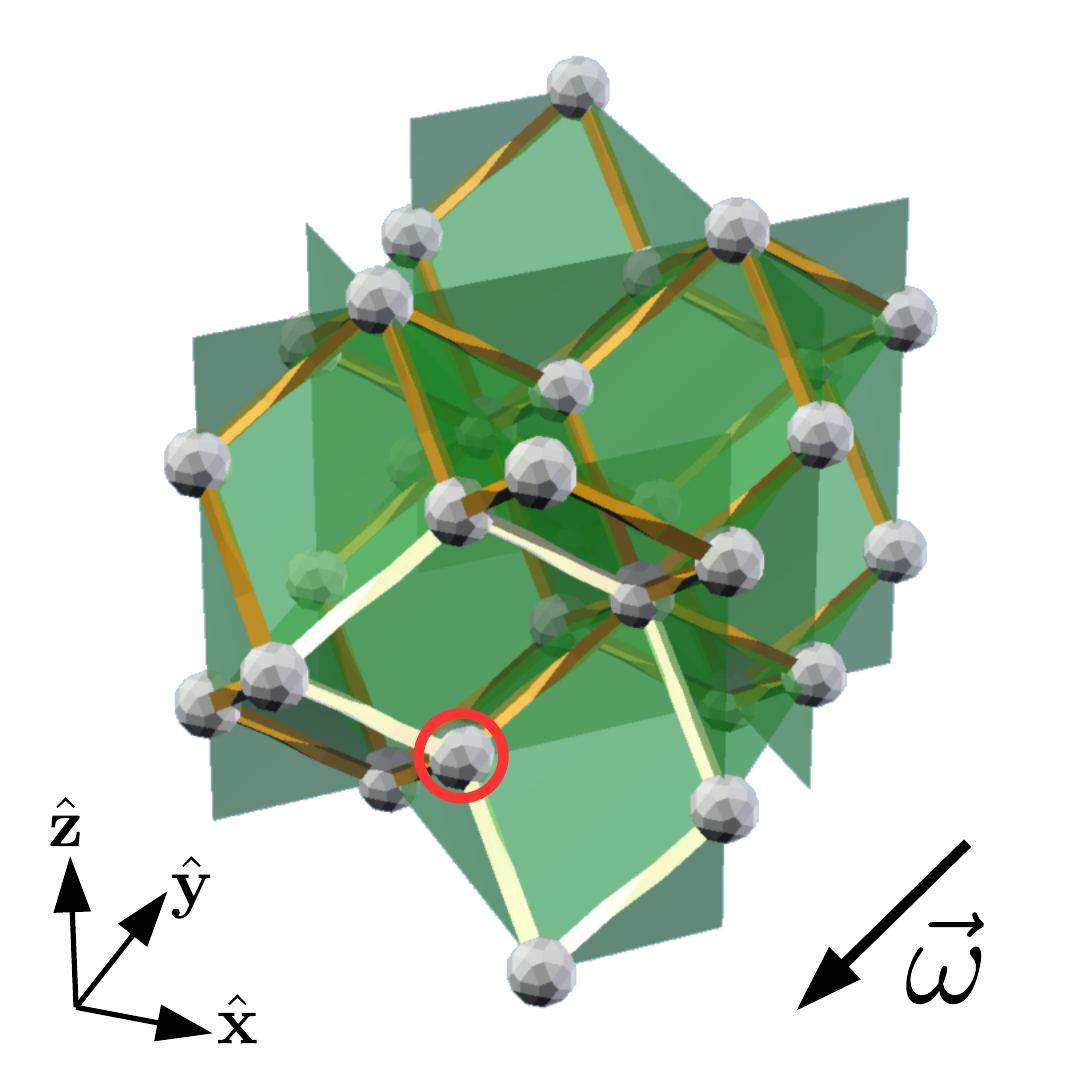}
    }
    \caption{A family of rhombic dodecahedral lattices with boundaries. (a) Construction of the $L=3$ lattice from the cubic lattice. We show the initial cubic lattice as well as the final rhombic dodecahedral lattice. (b) The $L=3$ lattice. The front and back boundaries are smooth (syndromes must form closed loops) and the other boundaries are rough (open loops of syndrome can terminate). (c) An example of a open loop of syndrome terminating on one of the rough boundaries (light yellow edges). (d) A syndrome on the smooth boundary (light yellow edges). With the sweep direction $\vec \omega = -(1,1,1)$, the ringed vertex does not satisfy the trailing location condition.}
    \label{fig:rb-lat}
\end{figure*}

We denote the vertices, edges, and faces of the lattice $\mathcal L$ by $\mathcal L_0$, $\mathcal L_1$, and $\mathcal L_2$, respectively. As in the previous section, we associate qubits with the faces of the lattice and we associate $X$ stabilizer generators with the edges. We define a local region of $\mathcal L$ to be a region of $\mathcal L$ with diameter smaller than $L/2$. In particular, there are no non-trivial logical $Z$ operators supported within a local region. The boundary map $\partial$ is defined in the same way as the infinite case, except for an important caveat. Some faces in $\mathcal L$ only have one or two edges (see \cref{fig:rb-lat}), so $\partial$ can return the sum of fewer than four edges. In the bulk of the lattice, syndromes must form closed loops. As illustrated in \cref{subfig:rs}, the lattices in our family have two types of boundaries: rough and smooth. Open loops of syndrome may only terminate on the rough boundaries.

For a chosen sweep direction, some vertices on the boundaries of $\mathcal L$ will not satisfy the trailing location condition (see \cref{subfig:smooth} for an example). This means that some syndromes near the boundary are immobile under the action of the Sweep rule. Clearly, this poses a significant problem for a decoder based on the Sweep rule, as there are some constant weight errors whose syndromes are immobile. Fortunately, there is a simple solution to this problem: periodically varying the sweep direction. We pick a set of eight sweep directions 
\begin{equation}
    \Omega = \{ (\pm 1, \pm 1, \pm 1) \},    
\end{equation}
where each $\vec \omega \in \Omega$ is parallel to one of the eight edge directions of $\mathcal L$. The rule for each direction is analogous to that shown in \cref{subfig:sr}. 

For our set of sweep directions, there are no errors within a local region whose syndromes are immobile for every sweep direction. To make this statement precise, we need to adapt the definition of a causal diamond to lattices with boundaries. As $\mathcal L$ is a subset of the infinite rhombic dodecahedral lattice $\mathcal L^\infty$, the causal diamond of $U \subseteq \mathcal L_0$ is well defined in $\mathcal L^\infty$ for any $\vec \omega \in \Omega$, but it may contain subsets of vertices that are not in $\mathcal L$. Roughly speaking, we define the causal region of $U$ with respect to $\vec \omega$, $\mathcal R_{\vec \omega} (U)$, to be the restriction of the causal diamond in $\mathcal L^\infty$ (with respect to $\vec \omega$) to the finite lattice $\mathcal L$. More precisely, it is a subset of the elements of the causal diamond, whose vertices belong to $\mathcal L$, i.e.\ $\mathcal R_{\vec \omega} (U) = 2 ^ {\lozenge_{\vec \omega} (U) \cap \mathcal L_0}$, where $\lozenge_{\vec \omega} (U)$ is the causal diamond with respect to $\vec \omega$. We now state a lemma that is sufficient to prove that a decoder based on the Sweep rule has a non-zero error threshold for rhombic dodecahedral lattices with boundaries. 

\begin{lemma}
    \todo{Let $\Omega$ be a set of sweep directions and $\sigma \in \im \partial$ be a syndrome, such that, for all sweep directions $\vec \omega \in \Omega$, the causal region of $\sigma$ with respect to $\vec \omega$,  $\mathcal R_{\vec \omega} (\sigma)$, is contained in a local region of $\mathcal L$.}
    Then there exists a sweep direction $\vec \omega^* \in \Omega$ such that the trailing location condition is satisfied at every vertex in $\mathcal R_{\vec \omega^*} (\sigma)$.
    \label{lem:trailing}
\end{lemma}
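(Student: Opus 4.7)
The plan is to analyze which vertices of $\mathcal L$ can fail the trailing location condition for which sweep directions in $\Omega$, and then exploit the locality hypothesis on $\mathcal R_{\vec \omega}(\sigma)$ to guarantee that a single $\vec \omega^* \in \Omega$ works everywhere in the causal region.

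First I would partition the vertices of $\mathcal L$ into bulk vertices, rough-boundary vertices, and smooth-boundary vertices. For any bulk vertex, a sufficiently small neighbourhood is isometric to the corresponding neighbourhood in $\mathcal L^\infty$, so by the analysis already established for the rhombic dodecahedral toric code on $\mathcal L^\infty$ (illustrated in \cref{subfig:sr}) the trailing location condition holds for every $\vec \omega \in \Omega$. For rough-boundary vertices, syndrome edges are allowed to terminate, and one can verify by a direct case check on the finitely many vertex neighbourhoods that arise that the condition still holds for every $\vec \omega \in \Omega$, because the rough boundary never forces $\sigma\rest v$ to sit in $\future v$ without an accompanying $\varphi(v)$.

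Next I would focus on smooth-boundary vertices, which are the only potentially problematic case (cf.\ \cref{subfig:smooth}). The key geometric claim is that, at a smooth-boundary vertex $v$, the condition fails only when the sweep direction $\vec \omega$ points \emph{into} the boundary, since then the missing faces are exactly the ones the rule would wish to assign to $\varphi(v)$; if instead $\vec \omega$ points \emph{out} of the boundary, then $\sigma\rest v$ lies in the past of $v$, so $v$ is simply not trailing and the condition is vacuous. Consequently, each of the two smooth boundaries of $\mathcal L$ forbids exactly those four directions of $\Omega$ whose component along the outward normal of that boundary has a single prescribed sign, leaving the opposite four directions safe.

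Finally, I would invoke the diameter hypothesis. Since $\mathcal R_{\vec \omega}(\sigma)$ is contained in a local region (diameter less than $L/2$) for every $\vec \omega \in \Omega$, and the two smooth boundaries of $\mathcal L$ are separated by a distance of order $L$, the causal region can meet at most one smooth boundary. Thus at most one sign constraint is imposed on $\vec \omega^*$, so one of the four remaining directions in $\Omega$ works; if the causal region meets neither smooth boundary, any $\vec \omega^* \in \Omega$ suffices. The main obstacle, which I would handle by a finite case analysis on the degree-four and degree-eight vertex neighbourhoods that appear on the smooth boundary, is establishing the hemisphere characterisation in the previous paragraph, i.e.\ verifying that flipping the relevant sign of $\vec \omega$ truly moves $\sigma\rest v$ out of $\future v$ rather than merely permuting the faces in $\mathcal Q\rest v \cap \future v$.
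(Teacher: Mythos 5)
There is a genuine gap: your case analysis of the boundary vertices is inverted relative to what actually happens on this lattice, and this breaks the final counting argument. First, your claim that rough-boundary vertices satisfy the trailing location condition for every $\vec\omega\in\Omega$ is false. Because open syndrome loops terminate on the rough boundaries, there are rough-boundary vertices $v$ for which $\sigma\rest v$ is nonempty and lies in $\future v$, yet the face that would serve as $\varphi(v)$ is missing from $\future v$; see \cref{fig:rough} for explicit examples (e.g.\ on the top rough boundary the directions $\vec\omega=(1,-1,-1)$ and $\vec\omega'=(-1,1,-1)$ fail). The problematic directions at a rough boundary are the \emph{inward}-pointing ones, so each rough boundary the causal region touches imposes a sign constraint on $\vec\omega^*$. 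Second, your smooth-boundary analysis has the wrong sign: the failing directions there are the ones pointing \emph{outwards} from the smooth boundary (in \cref{subfig:smooth} the ringed vertex fails for $\vec\omega=-(1,1,1)$ and $\vec\omega'=(1,-1,1)$, both outward), and it is the four \emph{inward}-pointing directions that are safe. Your heuristic that an outward direction makes $\sigma\rest v$ lie in the past of $v$ does not hold; the missing face can leave $v$ trailing with no admissible $\varphi(v)$.

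As a consequence, your conclusion that at most one sign constraint is imposed (leaving four valid directions) is wrong. A local region of diameter less than $L/2$ can meet up to two rough boundaries and one smooth boundary simultaneously, yielding three sign constraints that pin down a \emph{unique} $\vec\omega^*\in\Omega$ (this is why all eight directions in $\Omega$ are needed, one per corner combination; see \cref{fig:allowed_dirs}). The lemma's conclusion still holds, but your argument would certify sweep directions that in fact fail at rough-boundary vertices, so the proof does not go through as written. The structure you want is: bulk vertices are unconstrained; each rough boundary forces the corresponding coordinate of $\vec\omega^*$ to point outwards; each smooth boundary forces the corresponding coordinate to point inwards; and the diameter hypothesis bounds the number of simultaneous constraints by three, which is still satisfiable within $\Omega$.
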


We defer the proof of \cref{lem:trailing} until \cref{sec:methods}. We now give a pseudocode description of a modified sweep decoder that works for 3D toric codes defined on lattices with boundaries. \todo{We note that all addition below is carried out modulo 2, as we view errors and syndromes as $\mathbb{F}_2$ vectors (see \cref{subsec:rhombic}).}

\begin{algorithm}[H]
    \caption{\textsc{Sweep Decoder}}
    \label{alg:srd}
    \begin{algorithmic}
        \Require Error syndrome $\sigma \in \im \partial$, set of sweep directions $\Omega$, runtime per direction $T_{max}$
        \Ensure Correction $\varphi \subseteq \mathcal L_2$
        \State $\varphi \gets 0$
            \For{$j\gets 1$ \textbf{to} $|\Omega|$}
                \State $T\gets 1$
                \State $\vec \omega \gets \Omega[j]$
                \algorithmiccomment{\todo{Change sweep direction.}}
                \While{$T \leq T_{max}$}
                     \State \multiline{%
                     Apply the sweep rule with sweep direction $\vec \omega$ simultaneously to every vertex of $\mathcal{L}$ to get $\varphi^{(T)} \subseteq \mathcal L_2$, the set of qubits returned by the rule.}
                    \State $\sigma \gets \sigma + \partial \varphi^{(T)}$ \algorithmiccomment{\todo{Update syndrome.}}
                    \State $\varphi \gets \varphi + \varphi^{(T)}$ 
                    \algorithmiccomment{\todo{Update correction.}}
                    \State $T\leftarrow T+1$
                \EndWhile
            \EndFor
        \If{$\sigma \neq 0$}
            \State \Return \textsc{fail}
        \Else
            \State \Return $\varphi$
        \EndIf
\end{algorithmic}
\end{algorithm}

The decoder can fail in two ways. Firstly, if the syndrome is still non-trivial after $|\Omega| \times T_{max}$ applications of the sweep rule, we consider the decoder to have failed. Secondly, the decoder can fail because the product of the correction and the original error implements a non-trivial logical operator. We can now state our main theorem. 

\begin{theorem}
    Consider 3D toric codes defined on a family of rhombic dodecahedral lattices $\mathcal L$, with growing linear size $L$. There exists a constant $p_{\mathrm{th}}>0$ such that for any phase-flip error rate $p<p_{\mathrm{th}}$, the probability that the sweep decoder fails is $\BigO{(p/p_{\mathrm{th}})^{\beta_1 L^{\beta_2}}}$, for some constants $\beta_1,\beta_2 > 0$.
    \label{thm:threshold}
\end{theorem}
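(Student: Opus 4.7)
My plan is to prove this via a standard Peierls-style combinatorial argument combined with the two key lemmas already in hand. The high-level idea is that, below threshold, a typical phase-flip error pattern decomposes into well-separated small (local) error clusters, each of which the modified sweep decoder corrects in a homologically trivial way, while the probability of seeing any cluster large enough to span a macroscopic fraction of $\mathcal L$ decays exponentially in its size.

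First, I would introduce a suitable notion of \emph{connected error clusters} on the qubit set $\mathcal L_2$: two errored faces belong to the same cluster if they can be joined by a chain of errored faces in which consecutive faces share a syndrome edge (or, more generally, lie within constant syndrome-graph distance $d_G$ in $G$). Call a cluster \emph{small} if it fits inside a local region of $\mathcal L$ (diameter less than $L/2$) and \emph{bad} otherwise. By the geometry of $\mathcal L$, forming a bad cluster requires at least $n_0 = \Omega(L^{\beta_2})$ phase-flip errors for some constant $\beta_2>0$.

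Next, I would choose the per-direction runtime to be $T_{max} = \Theta(L)$ so that, by the removal clause of \cref{lem:sr-props}, the sweep rule in any fixed direction clears any syndrome confined to a small cluster's causal region within $T_{max}$ time steps (chain lengths inside such a region are bounded by its diameter, which is less than $L/2$). For a configuration whose clusters are all small and pairwise well separated, the support clause of \cref{lem:sr-props} guarantees that the action on one cluster is confined to its causal region and does not disturb any other cluster; \cref{lem:trailing} then provides, for each small cluster, a sweep direction $\vec \omega^* \in \Omega$ in which every vertex of the relevant causal region is trailing, so the sweep rule cleanly removes that cluster's syndrome on the corresponding pass of \cref{alg:srd}. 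Cycling through all eight directions of $\Omega$ ensures every small cluster meets its good direction, and because each small cluster is contained in a local region, the product of the true error and the decoder's output is necessarily a stabilizer (no non-trivial logical $Z$ fits in a local region), so no logical failure can occur from small clusters.

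The remaining and most delicate step, which I expect to be the main obstacle, is the combinatorial bound on the probability that some bad cluster exists. By a standard animal-counting estimate, the number of connected subsets of $G$ of size $n$ containing a fixed node is at most $c^n$ for a constant $c$ depending only on the degree of $G$ and on the connectivity scale used. Since each such cluster requires $n$ independent phase-flip events of probability $p$, and a bad cluster has $n \ge n_0 = \Omega(L^{\beta_2})$, a union bound over starting qubits yields
\begin{equation}
\Pr[\text{decoder fails}] \;\leq\; |\mathcal L_2| \sum_{n \geq n_0} (cp)^n \;=\; \BigO{(p/p_{\mathrm{th}})^{\beta_1 L^{\beta_2}}}
\end{equation}
for any $p < p_{\mathrm{th}} := 1/c$ and suitable $\beta_1>0$. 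The subtle part is calibrating the notion of connectivity so that (i) the cluster-independence claim used in the previous paragraph really holds across the full $|\Omega|\times T_{max}$ runtime, including the direction switches (one must verify that transient syndromes created during an unfavourable direction never escape their causal region before being erased by the favourable one), and (ii) the animal-counting constant $c$ and the per-error probability $p$ still combine into a convergent geometric series. Dovetailing these two requirements is where the bulk of the technical work of \cref{sec:methods} should go.
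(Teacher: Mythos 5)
Your overall architecture (correct isolated small clusters via \cref{lem:sr-props} and \cref{lem:trailing} with $T_{max}=\Theta(L)$, note that local-region containment forces the residual to be a stabilizer, then Peierls-count the large configurations) is the right shape, but the single-scale cluster decomposition leaves a genuine gap at exactly the point you flag as ``the subtle part,'' and that gap cannot be closed at one scale. The issue is that under constant-scale connectivity, two \emph{distinct} clusters are only guaranteed to be separated by a constant distance $l$, while the region the sweep rule touches when clearing a cluster is its causal region, whose diameter grows like $c_D^{|\Omega|}\,\ddiam(\sigma)$, i.e.\ proportionally to the cluster's own diameter. So two clusters of diameter $L/10$ sitting at distance $2l$ from one another are each ``small'' and are distinct clusters in your decomposition, yet their causal regions overlap massively; the evolving syndromes merge, the union's causal region can exceed a local region so \cref{lem:trailing} no longer applies, and the removal guarantee (and triviality of the residual) is lost. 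This failure mode is not a ``bad cluster'' in your sense, so your union bound $|\mathcal L_2|\sum_{n\ge n_0}(cp)^n$ does not account for it: the dangerous event is not only ``one connected cluster spans $\Omega(L)$'' but also ``moderately sized clusters conspire to be close relative to their own diameters.''

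The paper resolves this with the Bravyi--Haah multi-scale chunk decomposition: the error is split as $\epsilon=F_0\sqcup\cdots\sqcup F_m$ with chunk diameters growing like $Q^n$, and the Connected Components lemma guarantees that a $Q^n$-connected component $M$ of $F_n$ satisfies both $\ddiam(M)\le Q^n$ \emph{and} $d(M,E_n\setminus M)>Q^{n+1}/3$ --- separation scaling as a large multiple of the diameter, which is exactly what is needed for the causal regions (inflated by $c_P c_D^{8}$) to stay disjoint and for an induction over levels $n<m^*=\ceil{\log_Q(L/2c_D^8)}$ to go through. The price is a weaker exponent, $\beta_2=\log_Q 2<1$ rather than the $\beta_2=1$ your single-scale count would suggest, and the probability bound comes from a percolation estimate on level-$m^*$ chunks rather than a lattice-animal count. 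To repair your argument you would need to replace your cluster notion with this hierarchical decomposition (or otherwise prove a separation-proportional-to-diameter property), at which point you essentially recover the paper's proof.
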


We provide a sketch of the proof here, and postpone the details until Supplementary Note 1. Our proof builds on~\cite{Kubica2019,Bravyi2013,Harrington2004,Gacs1988} and relies on standard results from the literature~\cite{Bravyi2013,Van1985}.

\begin{proof}
    First, we consider a chunk decomposition of the error. This is a recursive definition, where the diameter of the chunk is exponential in the recursion level. Next, we use the properties of the sweep rule (\cref{lem:sr-props}) to show that the sweep decoder successfully corrects chunks up to some level $m^{*}=\BigO{\log L}$. To accomplish this, we first show that the sweep decoder is guaranteed to correct errors whose diameter is smaller than $L$, in a number of time steps that scales linearly with the diameter. Secondly, we rely on a standard lemma that states that a connected component of a level-$m$ chunk is well separated from level-$n$ chunks with $n\geq m$, which means that the sweep decoder corrects connected components of the error independently. Finally, percolation theory tells us that the probability of an error containing a level-$n$ chunk is $\BigO{(p/p_{\mathrm{th}})^{2^n}}$, for some $p_{\mathrm{th}}>0$. As the decoder successfully corrects all level-$n$ chunks for $n<m^{*}=\BigO{\log L}$, the failure probability of the decoder is $\BigO{(p/p_{\mathrm{th}})^{\beta_1 L^{\beta_2}}}$.
\end{proof}

\subsection{Numerical implementation and optimization \label{subsec:num-meas}}

We implemented the sweep decoder in C++ and simulated its performance for 3D toric codes defined on rhombic dodecahedral lattices, with and without boundaries. The code is available online \footnote{See the repository at \url{https://github.com/MikeVasmer/Sweep-Decoder-Boundaries} for a C++ implementation of the sweep decoder, as well as interactive models of the lattices with boundaries.}. We study the performance of the sweep decoder for error models with phase-flip and measurement errors. Specifically, we simulate the following procedure.

\begin{definition}[Decoding with noisy measurements]
    Consider the 3D toric code with qubits on faces and $X$ stabilizers on edges. At each time step $T\in\{1,\ldots,N\}$, the following events take place:
    \begin{enumerate}
        \item A $Z$ error independently affects each qubit with probability $p$.
        \item The $X$ stabilizer generators are measured perfectly.
        \item Each syndrome bit is flipped independently with probability $q=\alpha p$, where $\alpha$ is a free parameter.
        \item The sweep rule is applied simultaneously to every vertex, and $Z$ corrections are applied to the qubits on faces returned by the rule.
    \end{enumerate}
    After $N$ time steps have elapsed, the $X$ stabilizer generators are measured perfectly and we apply \cref{alg:srd}. Decoding succeeds if, and only if, the product of the errors and corrections (including the correction returned by \cref{alg:srd}) is a $Z$-type stabilizer.
    \label{def:problem}
\end{definition}

We note that measuring the stabilizers perfectly after $N$ time steps may seem unrealistic. However, this can model readout of a CSS code, because measurement errors during destructive single-qubit $X$ measurements of the physical qubits have the same effect as phase-flip errors immediately prior to the measurements. 

We first present our results for lattices with periodic boundary conditions i.e.\ each lattice is topologically a 3-torus. Although changing the sweep direction is not necessary for such lattices, we observe improved performance compared with keeping a constant sweep direction. We consider error models with phase-flip error probability $p$ and measurement error probability $q = \alpha p$, for various values of $\alpha$. For a given error model, we study the error threshold of the decoder as a function of $N$, the number of cycles, where a cycle is one round of the procedure described in \cref{def:problem}. For a range of values of $N$, we estimate the logical error rate, $p_{\mathrm L}$, as a function of $p$ for different values of $L$ (the linear lattice size, or equivalently the code distance of the toric code). We estimate the error threshold as the value of $p$ at which the curves for different $L$ intersect. We find that the error threshold decays polynomially in $N$ to a constant value, the sustainable threshold. The sustainable threshold is a measure of how much noise the decoder can tolerate over a number of cycles much greater than the code distance. We use the following numerical ansatz 
\begin{equation}
    p_{\mathrm{th}}(N) = p_{\mathrm{sus}} \left[1 - \left(1 - \frac{p_{\mathrm{th}}(1)}{p_{\mathrm{sus}}}\right)N^{-\gamma}\right],
    \label{eq:sus-thresh}
\end{equation}
to model the behaviour of the error threshold as a function of $N$, where $\gamma$ and $p_{\mathrm{sus}}$ (the sustainable threshold) are parameters of the fit. For an error model with $\alpha = q/p = 1$, we find a sustainable threshold of $p_{\mathrm{sus}} \approx 2.1\%$, with $\gamma = 1.06$ and $p_{\mathrm{th}}(1) = 21.5\%$ (as shown in \cref{fig:rhombic-toric}).

We observe that the decoder has a significantly higher tolerance to measurement noise as opposed to qubit noise, as shown in \cref{fig:rhombic-toric}. We find that the maximum phase-flip error rate that the decoder can tolerate is $p \approx 2.9\%$ (for $q=0$), compared with a maximum measurement error rate of $q \approx 8\%$ (for $p \rightarrow 0^+$). Our results show that the sweep decoder has an inbuilt resilience to measurement errors. To understand why this is the case, let us analyse the effect of measurement errors on the decoder. First, consider measurement errors that are far from phase-flip errors. A single isolated measurement error cannot cause the decoder to erroneously apply a Pauli-$X$ operator. To deceive the decoder, two measurement errors must occur next to each other, such that a vertex becomes trailing. Second, measurement errors in the neighbourhood of phase-flip errors can interfere with the decoder, and prevent it from applying a correction. Thus, to affect the performance of the decoder, a single measurement error either has to occur close to a phase-flip measurement error. This explains why the sweep decoder has a higher tolerance of measurement errors relative to phase-flip errors.

We also simulated the performance of the decoder for lattices with boundaries. We consider toric codes defined the family of rhombic dodecahedral lattices with boundaries that we described in \cref{subsec:boundaries}. We find that the sustainable threshold of toric codes defined on this lattice family is $p_{\mathrm{sus}} \approx 2.1\%$, for an error model where $\alpha = q/p =  1$. This value matches the sustainable threshold for the corresponding lattice with periodic boundary conditions, as expected.

A natural question to ask is whether applying the sweep rule multiple times per stabilizer measurement improves the performance of the decoder. Multiple applications of the rule could be feasible if gates are much faster than measurements, as in e.g.\ in trapped-ion qubits~\cite{Crain2019}. We found that increasing the number of applications of the rule per syndrome measurement significantly improved the performance of the decoder for a variety of error models (including error models where $\alpha = q/p > 1$); see  \cref{fig:rate} for an example. 

\begin{figure}[!ht]
    \centering
    \includegraphics[width=0.8\textwidth]{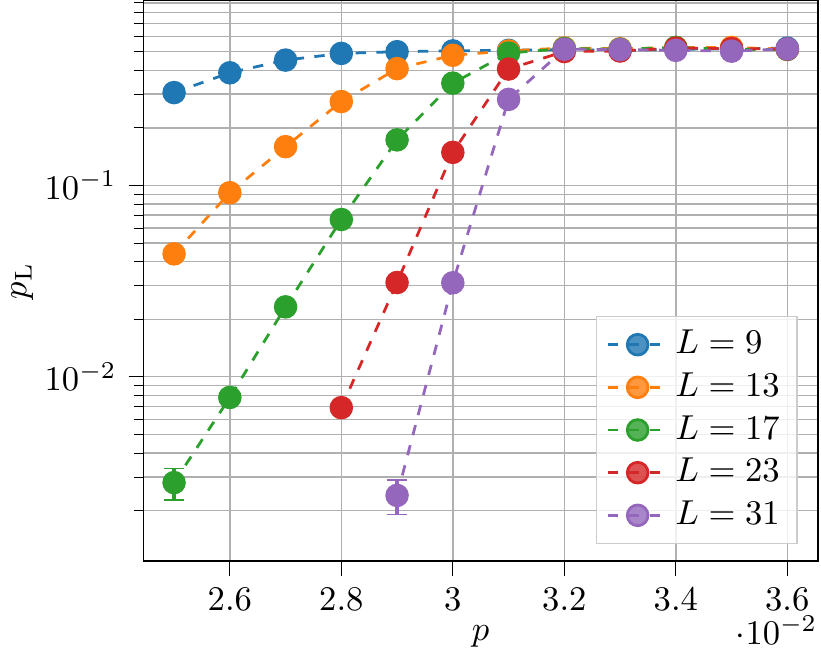}
    \caption{Improving the performance of the sweep decoder applied to the toric code on the rhombic dodecahedral lattice (with boundaries) by applying the rule multiple times per syndrome measurement. We set $\alpha = q / p = 1$ and we fix $N=2^{10}$ error correction cycles. We plot the logical error rate $p_{\mathrm L}$ as a function of $p$ for different linear lattice sizes $L$. We applied the rule three times per syndrome measurement and we observe an error threshold of $p_{\mathrm{th}} \approx 3.2\%$, an improvement of over the corresponding error threshold of $p_{\mathrm{th}} \approx 2.17\%$ when we applied the rule once per syndrome measurement (see the inset of \cref{fig:rhombic-toric}a). We use $10^4$ Monte Carlo samples for each point.}
    \label{fig:rate}
\end{figure}

The ability to change the sweep direction gives us parameters that we can use to tune the performance of the decoder. They are: the frequency with which we change the sweep direction and the order in which we change the sweep direction. We investigated the effect of varying both of these parameters. The most significant parameter is the direction-change frequency. Our simulation naturally divides into two phases: the error suppression phase where the rule applied while errors are happening, and the perfect decoding phase where the rule is applied without errors. In the error suppression phase, we want to prevent the build-up of errors near the boundaries so we anticipate that we may want to vary the sweep direction more frequently. We find that changing direction after $\sim \log L$ sweeps in the error-suppression phase and $L$ sweeps in the perfect decoding phase gave the best performance, as shown in \cref{fig:freq}. We find that the order in which we change the sweep direction does not appreciably impact the performance of the decoder. In addition, we find that the performance of the regular sweep decoder was superior to the greedy sweep decoder introduced in~\cite{Kubica2019}. 

\begin{figure}
    \centering
    \includegraphics[width=0.8\textwidth]{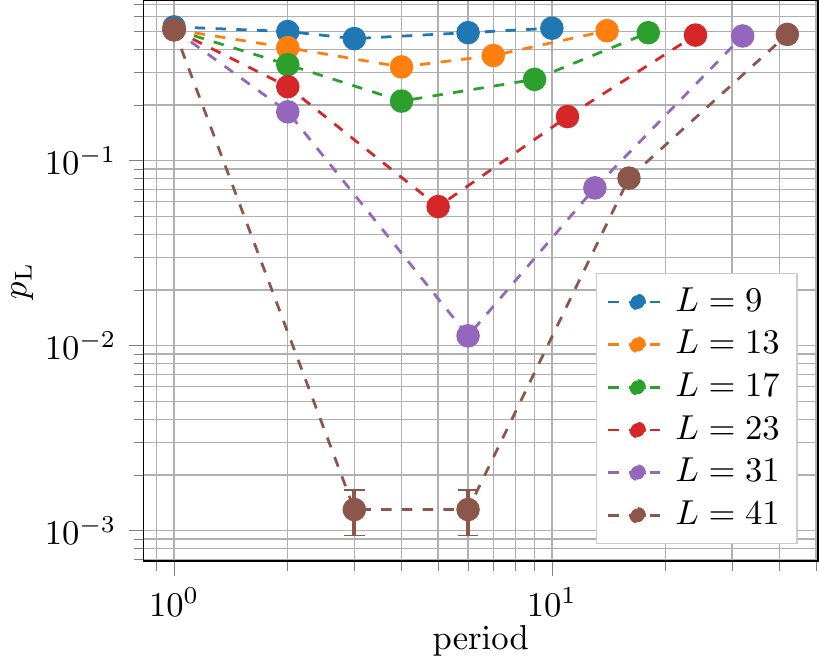}
    \caption{Optimizing the direction-change frequency for the sweep decoder applied to the toric code on the rhombic dodecahedral lattice (with boundaries). We plot the logical error rate $p_{\mathrm L}$ as a function of the direction-change period for various values of $L$. The number of error correction cycles is $N=2^{10}$ and $p=q=0.021$. We achieve the best performance when we change sweep direction every $\sim \log L$ cycles. We use $10^4$ samples for each point.}
    \label{fig:freq}
\end{figure}

Finally, we evaluated the performance of the sweep decoder against a simple correlated noise model, finding a reduced error threshold (see Supplementary Note 3). 

\section{Discussion \label{sec:discussion}}

In this article, we extended the definition of the sweep rule CA to causal codes. We also proved that the sweep decoder has a non-zero error threshold for 3D toric codes defined on a family of lattices with boundaries. In addition, we benchmarked and optimized the decoder for various 3D toric codes. 

We now comment on the performance of the sweep decoder compared with other decoding algorithms. Recall that in \cref{tab:pth_compare}, we list the error thresholds obtained from numerical simulations for various decoders applied to toric codes defined on the cubic lattice \todo{subject to phase-flip and measurement noise} (see Supplementary Note 2 for details on the sweep decoder numerics). Although the sweep decoder does not have the highest error threshold, it has other advantages that make it attractive. Firstly, as it is a CA decoder, it is highly parallelizable, which is an advantage when compared to decoding algorithms that require more involved processing such as the RG decoder introduced in Ref.~\cite{Duivenvoorden2019}. In addition, whilst neural network decoders such as~\cite{Breuckmann2018} have low complexity once the network is trained, for codes with boundaries the training cost may scale exponentially with the code distance~\cite{Maskara2019}. Also, the sweep decoder is the only decoder in \cref{tab:pth_compare} that exhibits single-shot error correction. In contrast, using the RG decoder it is necessary to repeat the stabilizer measurements $\BigO{L}$ times before finding a correction, which further complicates the decoding procedure. Finally, we note there is still a large gap between the highest error thresholds in \cref{tab:pth_compare} and the theoretical maximum error threshold of $p_{\mathrm{th}} = 23.180(4)\%$, as predicted by statistical mechanical mappings~\cite{Hasenbusch2007,Ozeki1998, Ohno2004, Kubica2018}. 

The sweep decoder could also be used in other topological codes with boundaries. Recently, it was shown that toric code decoders can be used to decode color codes~\cite{Kubica2019a}. Therefore, we could use the sweep decoder to correct errors in $(d \geq 3)$-dimensional color codes with boundaries, as long as the toric codes on the restricted lattices of the color codes are causal codes. 

We expect that the sweep decoder would be well-suited to (just-in-time) JIT decoding, which arises when one considers realizing a 3D topological code using a constant-thickness slice of active qubits. In this context, the syndrome information from the full 3D code is not available, but we still need to find a correction. The sweep decoder is ideally suited to this task, because it only requires local information to find a correction. Let us consider the JIT decoding problem described in~\cite{Browneaay4929} as an example. We start with a single layer of 2D toric code. Then we create a new 2D toric code layer above the first, with each qubit prepared in the $\ket{0}$ state. Next, we measure the $X$ stabilizers of the 3D toric code that link the two 2D layers. Then, we measure the qubits of the lower layer in the $Z$ basis. We iterate this procedure until we have realized the full 3D code. 

The $X$ stabilizer measurements will project two layers of 2D toric code into a single 3D toric code, up to a random Pauli $Z$ error applied to the qubits. This error will have a syndrome consisting of closed loops of edges. The standard decoding strategy is simply to apply an $Z$ operator whose boundary is equal to the closed loops. However, when measurement errors occur, the syndrome will consist of broken loops. We call the ends of broken loops breakpoints. To decode with measurement errors, one can first close the broken loops (pair breakpoints) using the minimum-weight perfect matching algorithm, before applying an $Z$ correction. However, in the JIT scenario, some breakpoints may need to be paired with other breakpoints that will appear in the future. In~\cite{Browneaay4929}, Brown proposed deferring the pairing of breakpoints until later in the procedure, to reduce the probability of making mistakes. However, given the innate robustness of the sweep decoder, instead of using the `repair-then-correct' decoder outlined above, we could simply apply the sweep decoder at every step and not worry about the measurement errors. Given the results presented in \cref{subsec:num-meas}, we anticipate that the sweep decoder would be effective in this case, and so provides an alternative method of JIT decoding to that is worth exploring. 

We finish this section by suggesting further applications of the sweep decoder, such as decoding more general quantum LDPC codes, e.g.\ homological product codes~\cite{Tillich2014}, especially those introduced in~\cite{Zeng2019}. The abstract reformulation of the sweep rule CA presented in \cref{sec:intro} provides a clear starting point for this task. \todo{However, we emphasize that it is still uncertain as to whether the sweep decoder would work in this case, as more general LDPC codes may not share the properties of topological codes that are needed in the proof of \cref{thm:threshold}.} In addition, we would like to prove the existence of a non-zero error threshold when measurements are unreliable. From our investigation of this question, it seems that the proof technique in Supplementary Note 1 breaks down for this case.  

\section{Methods \label{sec:methods}}

In this section, we prove \cref{lem:sr-props}, which concerned the properties of the sweep rule in an abstract setting. In addition, we show that the sweep rule retains essentially the same properties for rhombic dodecahedral lattices with boundaries. We require these properties to prove a non-zero error threshold for the sweep decoder (see Supplementary Note 1). We begin with a useful lemma about causal diamonds, which we will use throughout this section.

\begin{lemma}
    \todo{Let $V$ be a partially ordered set. For any finite subsets $U,W\subseteq V$, if $U\subseteq W$, then}
    \begin{equation}
        U\subseteq\dmond{U}\subseteq\dmond{W}.
    \end{equation}
    \label{lem:c-diamonds}
\end{lemma}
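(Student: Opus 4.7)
The plan is to prove the two inclusions $U\subseteq\dmond{U}$ and $\dmond{U}\subseteq\dmond{W}$ in turn by directly unwinding the definitions of $\future{\cdot}$, $\past{\cdot}$, $\inf$, and $\sup$, together with transitivity of $\preceq$. No machinery beyond the universal properties of greatest lower bounds and least upper bounds is needed.

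For the first inclusion, I would pick any $u\in U$ and verify that $u\in\future{\inf U}\cap\past{\sup U}=\dmond{U}$. Since $\inf U$ is by construction a lower bound of $U$, one has $\inf U\preceq u$, and since $\sup U$ is an upper bound of $U$, one has $u\preceq\sup U$; both memberships follow at once.

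For the second inclusion, the main step is monotonicity of $\inf$ and $\sup$ with respect to inclusion: $U\subseteq W$ forces $\inf W\preceq\inf U$ and $\sup U\preceq\sup W$. I would prove the first statement by noting that any lower bound of $W$ is automatically a lower bound of $U$, so $\inf W$ is itself a lower bound of $U$, and the greatest-lower-bound property of $\inf U$ then gives $\inf W\preceq\inf U$; the statement for $\sup$ is dual. Given this, an arbitrary $A\in\dmond{U}$ satisfies $\inf U\preceq A\preceq\sup U$, and chaining these with the two monotonicity inequalities via transitivity of $\preceq$ yields $\inf W\preceq A\preceq\sup W$, i.e.\ $A\in\dmond{W}$.

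The only mild subtlety I anticipate is notational: the lemma reads $U\subseteq\dmond{U}$ with $U\subseteq V$, while $\dmond{\cdot}$ was earlier described as a collection of subsets of $V$. This does not cause any real difficulty, since one can either read $\dmond{U}$ as the set of $v\in V$ with $\inf U\preceq v\preceq\sup U$, or identify each $v\in V$ with the singleton $\{v\}$ in the power-set reading; both conventions produce exactly the same proof, and there is no genuine obstacle beyond this bookkeeping.
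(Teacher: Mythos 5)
Your proof is correct and takes essentially the same approach as the paper: the first inclusion follows because $\inf U$ and $\sup U$ bound $U$, and the second from the monotonicity $\inf W\preceq\inf U$ and $\sup U\preceq\sup W$, which nests the futures and pasts. Your remark on the set-of-subsets versus set-of-locations reading of $\dmond{\cdot}$ is a fair observation about notation that the paper glosses over, and your resolution is the intended one.
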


\begin{proof}
    We recall that $\dmond{U}= \future{\inf U} \cap \past{\sup U}$. As both $\future{\inf U}$ and $\past{\sup U}$ contain $U$, their intersection also contains $U$, i.e.\ $U\subseteq\dmond{U}$. If $U\subseteq W$, then $\inf W\preceq\inf U$ so $\future{\inf U}\subseteq\future{\inf W}$. Likewise, if $U\subseteq W$, then $\sup U\preceq\sup W$ and therefore $\past{\sup U}\subseteq\past{\sup W}$. Consequently, $\dmond{U}\subseteq\dmond{W}$.
\end{proof}

\subsection{Proof of sweep rule properties \label{subsec:prop-proof}}

\begin{proof}[Proof of \cref{lem:sr-props}]
    First, we prove the support property by induction. At time step $T=1$ (before the rule is applied), this property holds. Now, consider the syndrome at time $T$, $\sigma^{(T)}$. Let $U^{(T)}$ denote the set of trailing locations of the syndrome at time step $T$. Between time steps $T$ and $T+1$, for each trailing location $u\in U^{(T)}$, the sweep rule will return a subset of qubits $\varphi^{(T)}(u)$ with the property that $[\partial \varphi^{(T)}(u)]\rest u = \sigma^{(T)}\rest u$. Therefore, the syndrome at time step $T+1$ is
    \begin{equation}
        \sigma^{(T+1)} = \sigma^{(T)} + \sum_{u\in U^{(T)}} \partial \varphi^{(T)}(u).
        \label{eq:supp1}
    \end{equation}
    By assumption, $\dmond{\partial\varphi (u)}=\dmond{\sigma^{(T)}\rest u}$, and $\dmond \sigma^{(T)} \subseteq \dmond \sigma$. Making multiple uses of \cref{lem:c-diamonds}, we have
    \begin{equation}
    \begin{split}
        \sigma^{(T+1)} &\subseteq \dmond{\sigma^{(T+1)}} = \dmond{\sigma^{(T)} \cup \bigcup_{u\in U^{(T)}}\partial \varphi^{(T)}(u)}, \\
        &\subseteq \dmond{\dmond{\sigma^{(T)}} \cup \bigcup_{u\in U^{(T)}}\dmond{\partial \varphi^{(T)}(u)}} \subseteq \dmond \sigma.
    \end{split}
    \label{eq:supp2}
    \end{equation}
    
    Next, we prove the propagation property, also by induction. The property is true at time step $T=1$. Now, we prove the inductive step from time step $T-1$ to $T$. As long as $\sigma^{(T)}\neq 0$, for every $S\in\sigma^{(T)}$, either $S\in\sigma^{(T-1)}$ or there exists an edge in the syndrome graph between the node corresponding to $S$ and a node corresponding to $S'\in\sigma^{(T-1)}$. By invoking the triangle inequality, we conclude that
    \begin{equation}
        d_G(S,\sigma)\leq d_G(S, S') + d_G(S', \sigma) = 1 + (T-1) = T.
        \label{eq:prop1}
    \end{equation}
    
    To prove the removal property, we define a function
    \begin{equation}
        f_{\sigma}(T) = \max_{v \in \bigcup \sigma^{(T)}}\max_{N \in \mathcal N (v, \sup \sigma)} \ell(N).
        \label{eq:rm1}
    \end{equation}
    In words, $f_{\sigma}(T)$ is the length of the longest chain between any location $v\in \bigcup \sigma^{(T)}$ and the supremum of the original syndrome, $\sup \sigma$. If $\sigma^{(T)} = \emptyset$, then we set $f_{\sigma}(T) = 0$. We now show that $f_{\sigma}(T)$ is a monotonically decreasing function of $T$. At time $T$, any location $v\in \bigcup \sigma^{(T)}$ that maximizes the value of $f(T)$ will necessarily be trailing. Between time steps $T$ and $T+1$, a subset of qubits $\varphi^{(T)}(v) \in \future v$ will be returned, and the syndrome will be modified such that $v\notin \bigcup \sigma^{(T+1)}$. Instead, there will be new locations $u\in\bigcup \sigma^{(T+1)}$, where every $u \succ v$, which implies that $f_{\sigma}(T+1) < f_{\sigma}(T)$. We note that because every qubit $Q \in \mathcal Q$ contains its unique infimum, it is impossible for a qubit to be returned multiple times by the rule in one time step. 
    
    The removal property immediately follows from the monotonicity of $f_{\sigma}(T)$. As we consider lattices with a finite number of locations, $f_{\sigma}(1)=\max_{v \in \bigcup \sigma}\max_{N \in \mathcal N (v, \sup \sigma)} \ell(N)$ will be finite. And between each time step, $f_{\sigma}(T)$ decreases by at least one, which implies that $\sigma^{(T)}=0$ for all $T > \max_{v \in \bigcup \sigma}\max_{N \in \mathcal N (v, \sup \sigma)}\ell(N)$. 
\end{proof}

\subsection{Sweep rule properties for rhombic dodecahedral lattices with boundaries}
\label{sec:proof-bound-sketch}

In this section, we show that the sweep rule retains the support, propagation and removal properties for rhombic dodecahedral lattices with boundaries, with some minor modifications. 

We recall that we use a set of eight sweep directions $\Omega = (\pm 1, \pm 1, \pm 1)$. By inspecting \cref{fig:sr-rd}, one can verify that no $\vec \omega \in \Omega$ is perpendicular to any of the edges of the rhombic dodecahedral lattice, so the partial order is always well defined. In \cref{subsec:boundaries}, we neglected a subtlety concerning causal regions. Consider the causal region of $U \subseteq \mathcal L_0$ with respect to $\vec \omega$, $\mathcal R_{\vec \omega} (U) = 2 ^ {\lozenge_{\vec \omega} (U) \cap \mathcal L_0}$. For a given $U$, causal regions with respect to different sweep directions may not be the same. Therefore, we must modify the definition of the causal region. Let $\{\vec \omega_1, \vec \omega_2, \ldots, \vec \omega_8 \}$ be an ordering of the sweep directions $\vec \omega_j \in \Omega$. We recursively define the causal region of $U$ to be 
\begin{equation}
    \reg U = \mathcal R_{\vec \omega_8} \circ \ldots \mathcal \circ \mathcal R_{\vec \omega_2} \circ \mathcal R_{\vec \omega_1}(U),
\end{equation}
i.e.\ to compute $\reg U$, we take the causal region of $U$ with respect to $\vec \omega_1$, then we take the causal region of $\mathcal R_{\vec \omega_1}(U)$ with respect to $\vec \omega_2$, and similarly until we reach $\vec \omega_8$. 

The first step in showing that the sweep rule has the desired properties is to prove \cref{lem:trailing}. This lemma is sufficient for proving the removal property of the rule. 

\begin{proof}[Proof of \cref{lem:trailing}]
    Any vertex not on the boundaries of $\mathcal L$ satisfies the trailing location condition for all $\vec \omega \in \Omega$, so we only need to check the vertices on the boundaries. 
    
    First, we consider the rough boundaries. On each such boundary, there are vertices that do not satisfy the trailing vertex condition for certain sweep directions. Let us examine each rough boundary in turn. First, consider the vertices on the top rough boundary, an example of which is highlighted in \cref{subfig:roughtop}. For these vertices, the problematic sweep directions are those that point inwards, i.e.\ $\vec \omega = (x,y,-1)$, $x,y = \pm 1$. Similarly, the problematic sweep directions for the vertices on the bottom rough boundary are $\vec \omega = (x,y,1)$, $x,y = \pm 1$. Next, consider the vertices on the right rough boundary (see \cref{subfig:roughright} for an example). The problematic sweep directions for these vertices are also those that point inwards, i.e.\ $\vec \omega = (-1,y,z)$, $y,z = \pm 1$. Analogously, the problematic sweep directions for the vertices on the left rough boundary are $\vec \omega = (1,y,z)$, $y,z = \pm 1$. Therefore, the satisfying sweep directions for vertices on the rough boundaries are those that point outwards, as shown in  \Cref{fig:allowed_dirs}. 
    
    \begin{figure}
        \centering
        \subfloat[]{
            \centering
            \includegraphics[width=0.44\textwidth]{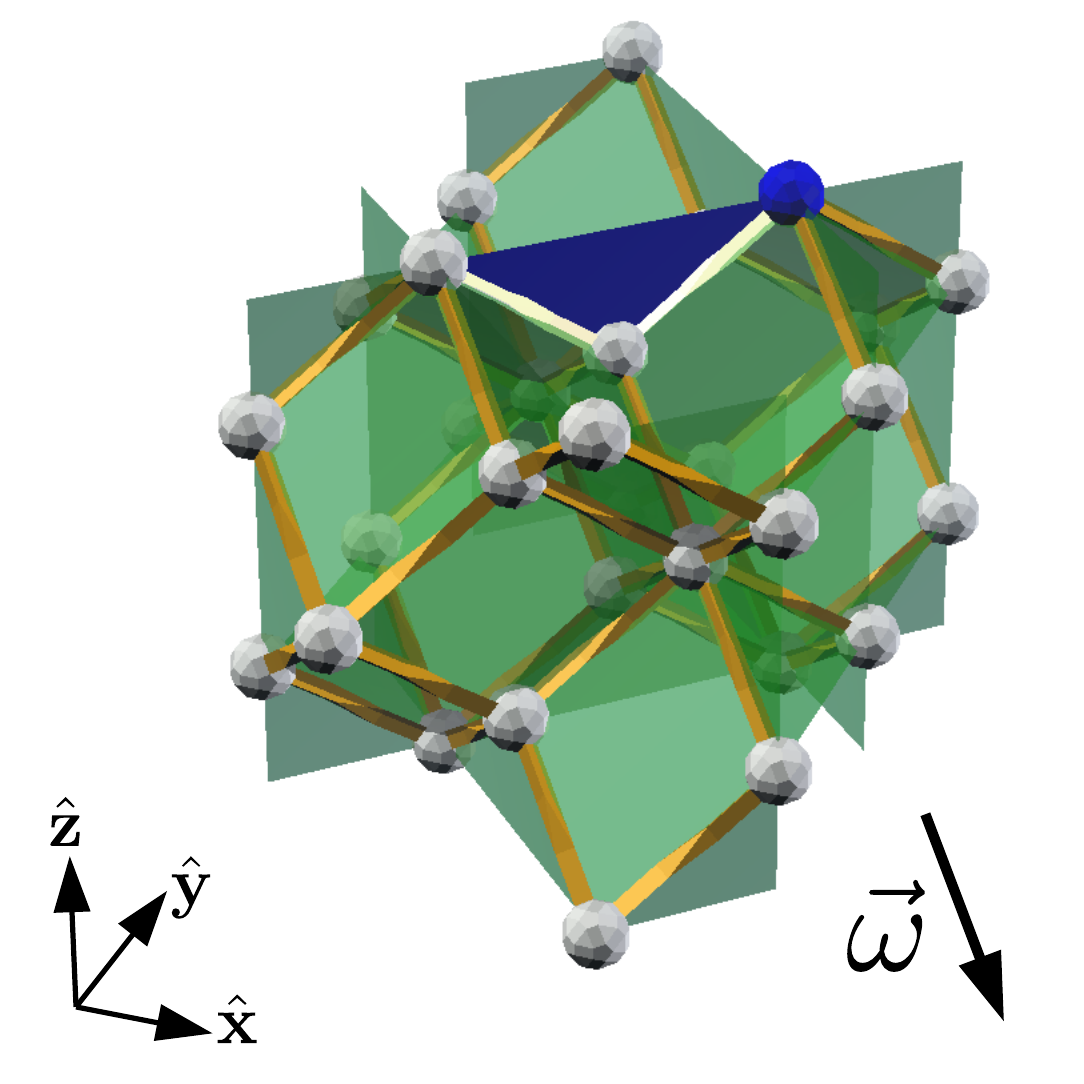}
            \label{subfig:roughtop}
        }
        \hfill
        \subfloat[]{
            \centering
            \includegraphics[width=0.44\textwidth]{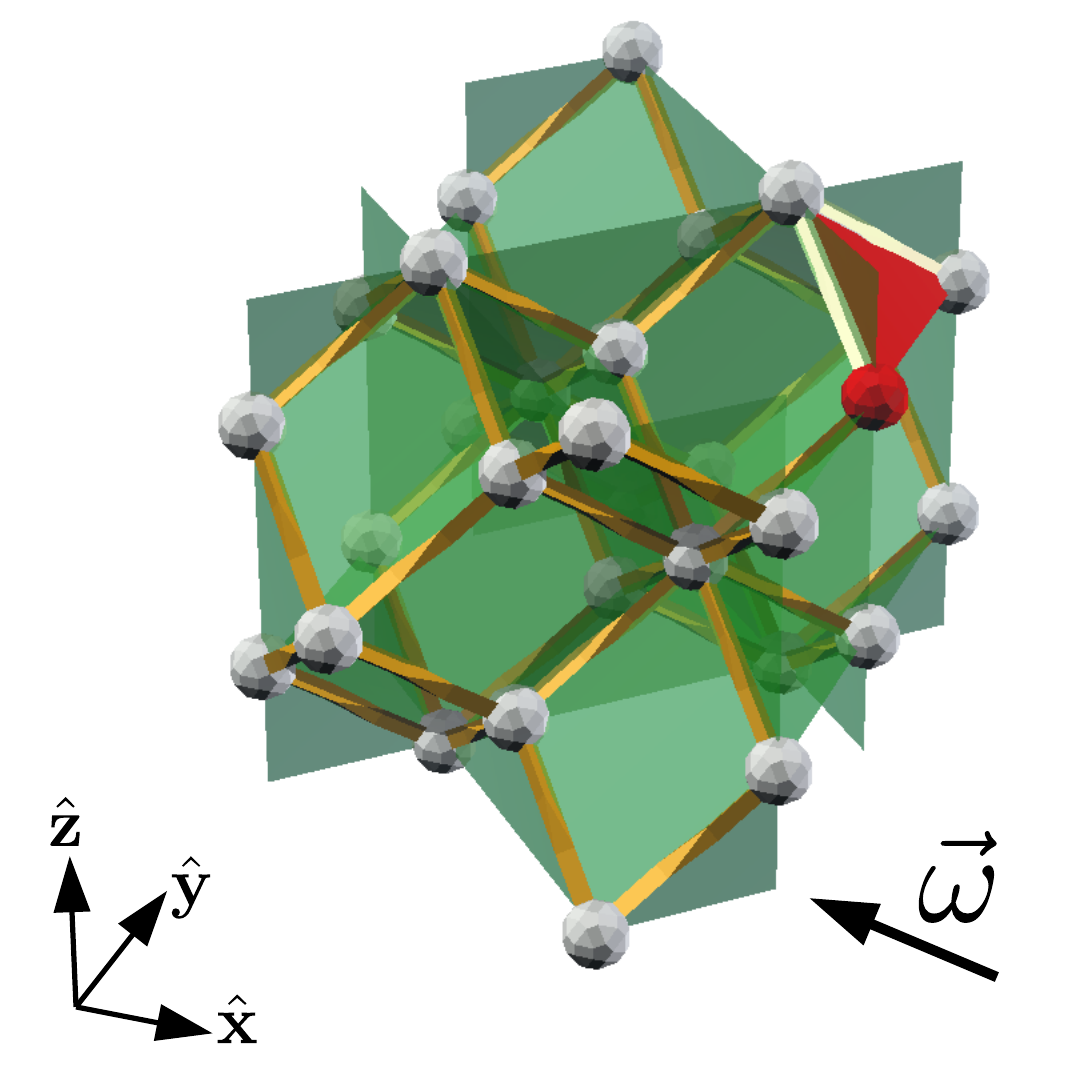}
            \label{subfig:roughright}
        }
        \caption{Vertices on the rough boundaries that do not satisfy the trailing vertex condition. In (a), we highlight such a vertex, $v$, in blue on the top rough boundary. Consider the highlighted qubit (blue face) and its syndrome (light yellow edges). For the sweep directions $\vec \omega = (1,-1,-1)$ and $\vec \omega' = (-1,1,-1)$, $v$ does not satisfy the trailing vertex condition because the blue face is not in $\future v$. In (b), we highlight a vertex  in red on the right rough boundary that does not satisfy the trailing vertex condition for the same reason as (a), where the problematic sweep direction is $\vec \omega = (-1, -1, 1)$.}
        \label{fig:rough}
    \end{figure}
    
    Now, consider the smooth boundaries. We have already analysed the vertices which are part of a rough boundary and a smooth boundary. For certain sweep directions, some vertices in the bulk of the smooth boundary do not satisfy the trailing vertex condition because of a missing face (see \cref{subfig:smooth} for an example). For each such vertex, there are two problematic directions (both of which point outwards from the relevant smooth boundary). In \cref{subfig:smooth}, these directions are $\vec \omega = -(1,1,1)$ and $\vec \omega' = (1,-1,1)$. However, for each smooth boundary there are four sweep directions (the ones that point inwards) for which every vertex in the bulk of the boundary satisfies the trailing vertex condition. \Cref{fig:allowed_dirs} illustrates the satisfying directions for each smooth boundary. 
    
    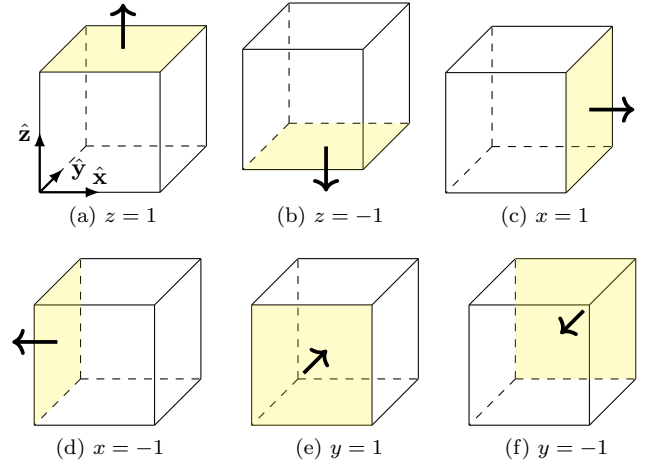
\begin{figure}
        \centering
        \centering
        \subfloat[$z=1$]{
            \centering
            \begin{tikzpicture}[scale=0.4]
                \coordinate[] (A) at (0,0,0);
                \coordinate[] (B) at (0,0,4);
                \coordinate[] (C) at (0,4,0);
                \coordinate[] (D) at (0,4,4);
                \coordinate[] (E) at (4,0,0);
                \coordinate[] (F) at (4,0,4);
                \coordinate[] (G) at (4,4,0);
                \coordinate[] (H) at (4,4,4);
                \draw [dashed] (A) -- (B);
                \draw [dashed] (A) -- (C);
                \draw [dashed] (A) -- (E);
                \draw (B) -- (D);
                \draw (B) -- (F);
                \draw (C) -- (D);
                \draw (C) -- (G);
                \draw (D) -- (H);
                \draw (E) -- (F);
                \draw (E) -- (G);
                \draw (F) -- (H);
                \draw (G) -- (H);
                \fill[yellow, opacity=0.2] (C) -- (D) -- (H) -- (G) -- cycle;
                \coordinate[] (Source) at (2, 4, 2);
                \coordinate[] (Drain) at (2, 5.5, 2);
                \draw [->, line width=0.5mm] (Source) -- (Drain);
                \draw [-latex, line width=0.3mm] (B) -- (0,0,2);
                \node[right] at (0,0,2) {$\hat{\mathbf y}$};
                \draw [-latex, line width=0.3mm] (B) -- (2,0,4);
                \node[above] at (2,0,4) {$\hat{\mathbf x}$};
                \draw [-latex, line width=0.3mm] (B) -- (0,2,4);
                \node[left] at (0,2,4) {$\hat{\mathbf z}$};
            \end{tikzpicture}
            \label{subfig:rt}
        }
        \hfill
        \subfloat[$z=-1$]{
            \centering
            \begin{tikzpicture}[scale=0.4]
                \coordinate[] (A) at (0,0,0);
                \coordinate[] (B) at (0,0,4);
                \coordinate[] (C) at (0,4,0);
                \coordinate[] (D) at (0,4,4);
                \coordinate[] (E) at (4,0,0);
                \coordinate[] (F) at (4,0,4);
                \coordinate[] (G) at (4,4,0);
                \coordinate[] (H) at (4,4,4);
                \draw [dashed] (A) -- (B);
                \draw [dashed] (A) -- (C);
                \draw [dashed] (A) -- (E);
                \draw (B) -- (D);
                \draw (B) -- (F);
                \draw (C) -- (D);
                \draw (C) -- (G);
                \draw (D) -- (H);
                \draw (E) -- (F);
                \draw (E) -- (G);
                \draw (F) -- (H);
                \draw (G) -- (H);
                \fill[yellow, opacity=0.2, dashed] (A) -- (B) -- (F) -- (E) -- cycle;
                \coordinate[] (Source) at (2, 0, 2);
                \coordinate[] (Drain) at (2, -1.5, 2);
                \draw [->, line width=0.5mm] (Source) -- (Drain);
            \end{tikzpicture}
            \label{subfig:rb}
        }
        \hfill
        \subfloat[$x=1$]{
            \centering
            \begin{tikzpicture}[scale=0.4]
                \coordinate[] (A) at (0,0,0);
                \coordinate[] (B) at (0,0,4);
                \coordinate[] (C) at (0,4,0);
                \coordinate[] (D) at (0,4,4);
                \coordinate[] (E) at (4,0,0);
                \coordinate[] (F) at (4,0,4);
                \coordinate[] (G) at (4,4,0);
                \coordinate[] (H) at (4,4,4);
                \draw [dashed] (A) -- (B);
                \draw [dashed] (A) -- (C);
                \draw [dashed] (A) -- (E);
                \draw (B) -- (D);
                \draw (B) -- (F);
                \draw (C) -- (D);
                \draw (C) -- (G);
                \draw (D) -- (H);
                \draw (E) -- (F);
                \draw (E) -- (G);
                \draw (F) -- (H);
                \draw (G) -- (H);
                \fill[yellow, opacity=0.2] (E) -- (F) -- (H) -- (G) -- cycle;
                \coordinate[] (Source) at (4, 2, 2);
                \coordinate[] (Drain) at (5.5, 2, 2);
                \draw [->, line width=0.5mm] (Source) -- (Drain);
            \end{tikzpicture}
            \label{subfig:rr}
        }
        \hfill
        \subfloat[$x=-1$]{
            \centering
            \begin{tikzpicture}[scale=0.4]
                \coordinate[] (A) at (0,0,0);
                \coordinate[] (B) at (0,0,4);
                \coordinate[] (C) at (0,4,0);
                \coordinate[] (D) at (0,4,4);
                \coordinate[] (E) at (4,0,0);
                \coordinate[] (F) at (4,0,4);
                \coordinate[] (G) at (4,4,0);
                \coordinate[] (H) at (4,4,4);
                \draw [dashed] (A) -- (B);
                \draw [dashed] (A) -- (C);
                \draw [dashed] (A) -- (E);
                \draw (B) -- (D);
                \draw (B) -- (F);
                \draw (C) -- (D);
                \draw (C) -- (G);
                \draw (D) -- (H);
                \draw (E) -- (F);
                \draw (E) -- (G);
                \draw (F) -- (H);
                \draw (G) -- (H);
                \fill[yellow, opacity=0.2, dashed] (A) -- (B) -- (D) -- (C) -- cycle;
                \coordinate[] (Source) at (0, 2, 2);
                \coordinate[] (Drain) at (-1.5, 2, 2);
                \draw [->, line width=0.5mm] (Source) -- (Drain);
            \end{tikzpicture}
            \label{subfig:rl}
        }
        \hfill
        \subfloat[$y=1$]{
            \centering
            \begin{tikzpicture}[scale=0.4]
                \coordinate[] (A) at (0,0,0);
                \coordinate[] (B) at (0,0,4);
                \coordinate[] (C) at (0,4,0);
                \coordinate[] (D) at (0,4,4);
                \coordinate[] (E) at (4,0,0);
                \coordinate[] (F) at (4,0,4);
                \coordinate[] (G) at (4,4,0);
                \coordinate[] (H) at (4,4,4);
                \draw [dashed] (A) -- (B);
                \draw [dashed] (A) -- (C);
                \draw [dashed] (A) -- (E);
                \draw (B) -- (D);
                \draw (B) -- (F);
                \draw (C) -- (D);
                \draw (C) -- (G);
                \draw (D) -- (H);
                \draw (E) -- (F);
                \draw (E) -- (G);
                \draw (F) -- (H);
                \draw (G) -- (H);
                \fill[yellow, opacity=0.2, dashed] (B) -- (D) -- (H) -- (F) -- cycle;
                \coordinate[] (Source) at (2.5, 2.5, 4);
                \coordinate[] (Drain) at (2.5, 2.5, 6);
                \draw [->, line width=0.5mm] (Drain) -- (Source);
            \end{tikzpicture}
            \label{subfig:sf}
        }
        \hfill
        \subfloat[$y=-1$]{
            \centering
            \begin{tikzpicture}[scale=0.4]
                \coordinate[] (A) at (0,0,0);
                \coordinate[] (B) at (0,0,4);
                \coordinate[] (C) at (0,4,0);
                \coordinate[] (D) at (0,4,4);
                \coordinate[] (E) at (4,0,0);
                \coordinate[] (F) at (4,0,4);
                \coordinate[] (G) at (4,4,0);
                \coordinate[] (H) at (4,4,4);
                \draw [dashed] (A) -- (B);
                \draw [dashed] (A) -- (C);
                \draw [dashed] (A) -- (E);
                \draw (B) -- (D);
                \draw (B) -- (F);
                \draw (C) -- (D);
                \draw (C) -- (G);
                \draw (D) -- (H);
                \draw (E) -- (F);
                \draw (E) -- (G);
                \draw (F) -- (H);
                \draw (G) -- (H);
                \fill[yellow, opacity=0.2, dashed] (A) -- (C) -- (G) -- (E) -- cycle;
                \coordinate[] (Source) at (1.5, 1.5, 0);
                \coordinate[] (Drain) at (1.5, 1.5, -2);
                \draw [->, line width=0.5mm] (Drain) -- (Source);
            \end{tikzpicture}
            \label{subfig:sb}
        }
        \caption{The constraints on the sweep directions $\vec \omega = (x,y,z) \in \{ (\pm 1, \pm 1, \pm 1) \}$ such that all vertices on the highlighted boundary satisfy the trailing vertex condition.}
        \label{fig:allowed_dirs}
    \end{figure}
    
    We recall that a local region of $\mathcal L$ has diameter smaller than $L/2$. Any such region can intersect at most two rough boundaries and one smooth boundary e.g.\ the boundaries highlighted in \cref{subfig:rt,subfig:rr,subfig:sf}. For this combination of boundaries, there is only one sweep direction for which all vertices satisfy the trailing vertex condition: $\vec \omega = (1,1,1)$. This sweep direction will also work for a local region that intersects any two of the above boundaries, and any region that intersects any one of the above boundaries. There are eight possible combinations of three boundaries that can be intersected by a local region (corresponding to the eight corners of the cubes in \cref{fig:allowed_dirs}). Therefore, by symmetry, the set of sweep directions required for the Lemma to hold are exactly $\Omega = \{ (\pm 1, \pm 1, \pm 1) \}$.
\end{proof}

We now explain how to modify the proof of \cref{lem:sr-props} such that it applies to our family of rhombic dodecahedral lattices with boundaries. First, we note that the proof of the support property is identical, except we replace $\dmond \sigma$ by $\reg \sigma$. The proof of the propagation property is essentially the same, except that there may be some time steps where the syndrome does not move, as there are some sweep directions for which certain syndromes are immobile. However, this does not affect the upper bound on the propagation distance of any syndrome: it is still upper-bounded by $T$, the number of applications of the rule. We note that in \cref{lem:sr-props}, the propagation distance refers to path lengths in the syndrome graph. However, one can verify that there will always be an edge linking any syndrome $\sigma^{(T)}$ and its corresponding syndrome at the following time step $\sigma^{(T+1)}$ in $\mathcal L$. Therefore, the upper bound on the propagation distance also applies to the distance defined as the length of the shortest path between two vertices of $\mathcal L$. 

Finally, the removal property also holds for any syndrome $\sigma$ with $\reg \sigma$ contained in a local region, but with a longer required removal time. Suppose that we use the ordering $\{ \vec \omega_1, \vec \omega_2, \ldots, \vec \omega_8 \}$ and apply the sweep rule for $T^{*}$ time steps using $\vec \omega_1$, followed by $T^{*}$ applications using $\vec \omega_2$, and so on until we reach $\vec \omega_8$, where 
\begin{equation}
    T^{*} = \max_{\vec \omega \in \Omega} \max_{\cpath{\inf \reg \sigma}{\sup \reg \sigma}} |\cpath{\inf \reg \sigma}{\sup \reg \sigma}|,
    \label{eq:T*-bound}
\end{equation} 
i.e.\ the longest causal path between the infimum and supremum of $\reg \sigma$, maximized over the set of sweep directions. 

By \cref{lem:trailing}, there will always exist a sweep direction such that the trailing location condition is satisfied at every vertex in $\reg \sigma$. Therefore, one can make a similar argument to the one made in the proof of \cref{lem:sr-props} to show that there exists a monotone for at least one sweep direction, $\vec \omega$, 
\begin{equation}
    f(T)=\max_{v \in \sigma^{(T)}}|\cpath{v}{\sup \sigma^{(T)}}|,
    \label{eq:monotone-bound}
\end{equation}
which decreases by one at each time step the sweep rule is applied with direction $\vec \omega$. By the support property $f_{\sigma}(T) \leq T^{*}$ for all $T$, so if we apply the rule $T^{*}$ times with sweep direction $\vec \omega$, then the syndrome is guaranteed to be removed. But a priori we do not know which sweep direction(s) will remove a given syndrome. Therefore, to guarantee the removal of a syndrome, we must apply the sweep rule $T^{*}$ times in each direction, which gives a total removal time of $|\Omega|\times T^{*}$. 

\section*{Data availability}

The code used in this study is available in the GitHub repository, \url{https://github.com/MikeVasmer/Sweep-Decoder-Boundaries}. The data are available from the corresponding author upon reasonable request. 

\section*{Acknowledgements}

M.V. thanks Niko Breuckmann, Padraic Calpin, Earl Campbell, and Tom Stace for valuable discussions.
A.K. acknowledges funding provided by the Simons Foundation through the ``It from Qubit'' Collaboration.
This research was supported in part by Perimeter Institute for Theoretical Physics.
Research at Perimeter Institute is supported in part by the Government of Canada through the Department of Innovation, Science and Economic Development Canada and by the Province of Ontario through the Ministry of Colleges and Universities.
Whilst at UCL, M.V. was supported by the EPSRC (grant number EP/L015242/1).
D.E.B. acknowledges funding provided by the EPSRC (grant EP/R043647/1).
We acknowledge use of \href{https://www.ucl.ac.uk/research-it-services/services/research-computing-platforms}{UCL supercomputing facilities}.

\section*{Author contributions}

Concept conceived by all the authors. A.K. generalized the sweep rule. All authors contributed to the design of the sweep decoder. Analytic proofs by A.K. and M.V. Simulations designed and written by M.V., with input from D.E.B. and A.K. Data collected and analysed by M.V. Manuscript prepared by M.V. and A.K.

\section*{Competing Interests}

The authors declare that there are no competing interests.

\bibliographystyle{utphys}
\bibliography{references}

\onecolumngrid
\appendix

\section*{Supplementary Note 1 \label{app:thresh}}

In this note, we fill in the details of the proof of the non-zero threshold of the sweep decoder (Theorem 1). The proof strategy is essentially the same as in Ref.~\cite{Kubica2019}. First, we describe a decomposition of errors into chunks, which aids our analysis. Then, we show that the sweep decoder successfully corrects chunks of the error, up to a certain size. Finally, we find that the probability of the error containing chunks that cause the decoder to fail is exponentially suppressed in the linear size of the lattice. 

\subsection*{Chunk decomposition of the error}
\label{subsec:chunks}

We begin by defining the diameter of a subset of vertices, $U\subseteq \mathcal L_0$, to be the maximal distance between any two vertices in $U$, i.e.\ $\diam U = \max_{u,v \in U} \dist{u}{v}$, where $\dist{u}{v}$ is the length of the shortest path linking $u$ and $v$ in $\mathcal L$. Let $\epsilon \subset \mathcal L_2$ denote an error contained in a local region of $\mathcal L$. We now define a decomposition of errors into chunks, following~\cite{Bravyi2013}. A level-0 chunk, $E^{[0]}$ is an element of $\epsilon$ and a level-$n$ chunk is a defined recursively to be the disjoint union of two level-$(n-1)$ chunks $E_{1}^{[n-1]}$ and $E_{2}^{[n-1]}$, where $\diam{E^{[n]}}\leq Q^{n}/2$ for some constant $Q$. The level-$n$ error $E_{n}\subseteq\epsilon$ is defined to be the union of all level-$n$ chunks. We note that 
\begin{equation}
    \epsilon = E_0 \supseteq \ldots \supseteq E_m \supsetneq E_{m+1} = \emptyset.
    \label{eq:E-inclusion}
\end{equation}
Therefore, we can decompose the error into $F_{n}=E_{n}\setminus E_{n+1}$, for $n \in \{0, \ldots, m\}$ as follows
\begin{equation}
    \epsilon = F_0 \sqcup \ldots \sqcup F_m,
    \label{eq:F-inclusion}
\end{equation}
where $A \sqcup B$ denotes the disjoint union of $A$ and $B$. We say that $M\subseteq\epsilon$ is an $l$-connected component of $\epsilon$ if, for any $M_{1}, M_{2} \neq \emptyset$, if $M = M_1 \sqcup M_2$ then $d(M_1, M_2) \leq l$. We need the following lemma, which concerns the size and separation of connected components of $F_{i}$. 
\begin{lemma}{(Connected Components~\cite{Bravyi2013})}.
Let $\epsilon$ be an error with disjoint decomposition $\epsilon=F_{0} \sqcup \ldots \sqcup F_{m}$ and let $Q\geq 6$ be a constant. Suppose that $M\subseteq\epsilon$ is a $Q^{n}$-connected component of $F_{n}$. Then, $\diam{M}\leq Q^{n}$ and $\dist{M}{E_{n}\setminus M}>Q^{n+1}/3$.
\label{lem:concom}
\end{lemma}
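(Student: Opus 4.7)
The plan is to leverage the greedy hierarchical construction underlying the chunk decomposition: $E_{n+1}$ is built from $E_{n}$ by maximally pairing level-$n$ chunks into level-$(n+1)$ chunks subject to the diameter constraint $Q^{n+1}/2$, leaving the unpaired chunks in $F_{n}=E_{n}\setminus E_{n+1}$. Throughout, level-$n$ chunks are pairwise disjoint, since level-$0$ chunks are singletons and the recursion unites two disjoint chunks.

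The key technical input that I would prove first is a \emph{pairwise separation bound}: for any two distinct level-$n$ chunks $C_{1},C_{2}$ such that at least one of them lies in $F_{n}$, we have $d(C_{1},C_{2})>Q^{n+1}/2-Q^{n}$. This is because, had $\diam{C_{1}\cup C_{2}}\leq Q^{n+1}/2$ held, the greedy pairing at level $n+1$ would have merged them (or the $F_{n}$-chunk with its would-be partner) into a level-$(n+1)$ chunk, contradicting that the $F_{n}$-chunk is unpaired. Since each $C_{i}$ has diameter at most $Q^{n}/2$, the triangle inequality $\diam{C_{1}\cup C_{2}}\leq\diam{C_{1}}+d(C_{1},C_{2})+\diam{C_{2}}$ converts the diameter failure into the distance bound. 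For $Q\geq 6$ this yields $d(C_{1},C_{2})>Q^{n+1}/2-Q^{n}\geq Q^{n+1}/3$, which is precisely where the hypothesis on $Q$ is used.

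For the diameter claim, I would argue by contradiction: if $M$ intersected two distinct level-$n$ chunks $C_{1},C_{2}$, then both would contain elements of $F_{n}$ and hence be unpaired, so $C_{1},C_{2}\subseteq F_{n}$. But then the partition $M=(M\cap C_{1})\sqcup(M\cap C_{2})$ would satisfy $d(M\cap C_{1},M\cap C_{2})\geq d(C_{1},C_{2})>2Q^{n}>Q^{n}$, violating the $Q^{n}$-connectedness of $M$. Therefore $M$ lies in a single level-$n$ chunk $C\subseteq F_{n}$; maximality of $M$ as a $Q^{n}$-connected component, combined with $\diam{C}\leq Q^{n}/2$, forces $M=C$, and in particular $\diam{M}\leq Q^{n}$. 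For the separation claim, every $v\in E_{n}\setminus M$ lies in some other level-$n$ chunk $C'\neq C$, and the pairwise separation bound (applied to the unpaired chunk $C$ and to $C'$) immediately gives $d(M,v)\geq d(C,C')>Q^{n+1}/3$, regardless of whether $C'$ itself is paired at level $n+1$.

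The main obstacle will be writing down the greedy pairing rule precisely enough that the implication ``$C$ unpaired implies every candidate partner has union-diameter exceeding $Q^{n+1}/2$'' is honestly true rather than merely plausible, especially when multiple matchings are consistent with the diameter constraint. The standard move is to adopt the matching procedure of Bravyi--Haah verbatim; once that is fixed, the rest is bookkeeping with the triangle inequality, and the whole argument collapses to the arithmetic inequality $Q^{n+1}/2-Q^{n}\geq Q^{n+1}/3$, which holds precisely when $Q\geq 6$.
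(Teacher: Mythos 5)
The paper does not actually prove this lemma itself---it defers to Refs.~\cite{Bravyi2013,Kubica2019}---so your attempt has to stand on its own, and it does not, because it rests on a misreading of the chunk decomposition. In the definition used here (and in Bravyi--Haah), $E_{n+1}$ is the union of \emph{all} level-$(n+1)$ chunks: a point lies in $E_{n+1}$ iff it belongs to \emph{some} pair of disjoint level-$n$ chunks whose union has diameter at most $Q^{n+1}/2$. There is no greedy pairing step, and level-$n$ chunks are emphatically \emph{not} pairwise disjoint (with three mutually close points $a,b,c$, both $\{a,b\}$ and $\{a,c\}$ are level-$1$ chunks), so your opening premises are false. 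Worse, under your greedy-matching reinterpretation the ``pairwise separation bound'' that carries the whole argument genuinely fails: in a maximal matching an unpaired chunk $C$ can sit right next to a chunk $C'$ that was already consumed by a third chunk $C''$, and maximality forces neither branch of your parenthetical rescue (``or the $F_n$-chunk with its would-be partner''). Your proposed fix of adopting ``the matching procedure of Bravyi--Haah verbatim'' is vacuous, because no such procedure exists; the union-over-all-chunks definition is chosen precisely so that no matching choice is ever made. For the same reason, your step ``both would contain elements of $F_n$ and hence be unpaired, so $C_1,C_2\subseteq F_n$'' is unsound: a chunk meeting $F_n$ need not be contained in $F_n$.

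The correct argument keeps your arithmetic but discards the matching. If $u\in F_{n}$ and $v\in E_{n}$ lie in \emph{disjoint} level-$n$ chunks $C_{u}\ni u$ and $C_{v}\ni v$ with $d(u,v)\leq Q^{n+1}/3$, then
\begin{equation*}
\diam{C_{u}\cup C_{v}}\;\leq\;\tfrac{1}{2}Q^{n}+\tfrac{1}{3}Q^{n+1}+\tfrac{1}{2}Q^{n}\;=\;Q^{n}+\tfrac{1}{3}Q^{n+1}\;\leq\;\tfrac{1}{2}Q^{n+1}\qquad(Q\geq 6),
\end{equation*}
so $C_{u}\sqcup C_{v}$ is itself a level-$(n+1)$ chunk and $u\in E_{n+1}$, contradicting $u\in F_{n}$. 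This is exactly where $Q\geq 6$ enters, as in your arithmetic, but the contradiction is with membership in $F_n$ rather than with a hypothetical matching, and it holds against \emph{every} disjoint chunk, which is what makes the separation from all of $E_n\setminus M$ work. The remaining bookkeeping---the case of overlapping chunks and the diameter bound for a $Q^{n}$-connected component---needs more care than your ``$M$ equals a single chunk'' claim and is carried out in Ref.~\cite{Bravyi2013}, to which the paper points.
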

\cref{lem:concom} gives us both an upper bound on the size of any $Q^{n}$-connected component of the error and a lower bound on the separation of the $Q^{n}$-connected component from the rest of the error. For a proof, see~\cite{Bravyi2013,Kubica2019}. 

\subsection*{Correction of high-level chunks}
\label{subsec:chunk-rm}

\begin{lemma}
    Let $\epsilon \subset \mathcal L_2$ be an error with disjoint decomposition $\epsilon=F_{0}\sqcup F_{1} \ldots F_{m^*-1}$. Choose constants $Q = 6 |\Omega| c_{P} c_{D}^8$ and $m^{*}=\ceil{\log_{Q}(L/2 c_{D}^8)}$, where $L$ is the linear lattice size, $\Omega$ is the set of sweep directions, $c_{D}=2$ and $c_{P}=1$. Suppose we apply the sweep decoder with $T_{max}=c_{P}c_{D}^8 Q^{m^*}$. Then, $\epsilon$ is corrected, i.e.\ the product of $\epsilon$ and the correction returned by the sweep decoder is a stabilizer.
\label{lem:chunkremoval}
\end{lemma}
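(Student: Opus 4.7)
The plan is to prove the lemma by showing that, for each level $n$ from $0$ to $m^{\ast}-1$, every $Q^n$-connected component $M$ of $F_n$ has its syndrome removed by the sweep decoder in isolation from all other connected components at every level. The overall strategy closely mirrors the proof sketch in Ref.~\cite{Kubica2019}: geometric separation of chunks (via the Connected Components lemma, \cref{lem:concom}) lets us decouple components, while the support and removal properties of \cref{lem:sr-props} combined with \cref{lem:trailing} ensure that each decoupled component is eliminated within the allotted runtime.

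The key geometric step is to bound $\diam{\reg M}$ for a $Q^n$-connected component $M$ of $F_n$. By \cref{lem:concom}, $\diam{M} \leq Q^n$, and since each of the eight operators $\mathcal{R}_{\vec\omega_j}$ inflates the diameter by at most a factor $c_D=2$, composing them yields $\diam{\reg M} \leq c_D^8 Q^n$, which for $n\leq m^{\ast}-1$ satisfies $c_D^8 Q^n \leq c_D^8 Q^{m^{\ast}-1} < L/2$ by the definition of $m^{\ast}$; hence $\reg M$ is contained in a local region of $\mathcal L$ and \cref{lem:trailing} applies. The separation $\dist{M}{E_n\setminus M}>Q^{n+1}/3=2|\Omega|c_P c_D^8 Q^n$, supplied by \cref{lem:concom}, then strictly exceeds $c_D^8 Q^n$, so the causal regions of distinct connected components (at the same or higher levels) are pairwise disjoint.

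With decoupling established, I would apply the support property of \cref{lem:sr-props} to conclude that the portion of the running syndrome inherited from $M$ stays trapped inside $\reg M$ throughout the entire run of the sweep decoder, so no two components ever interact. For each $M$, \cref{lem:trailing} guarantees a direction $\vec\omega^{\ast}\in\Omega$ in which every vertex of $\reg M$ satisfies the trailing location condition; by the removal property, combined with the chain-length estimate \eqref{eq:T*-bound}, the syndrome fragment of $M$ is fully erased after at most $c_P\, c_D^8 Q^n$ rule applications with direction $\vec\omega^{\ast}$. Because the decoder runs $T_{max}=c_P c_D^8 Q^{m^{\ast}}\geq c_P c_D^8 Q^n$ steps per direction and cycles through all directions in $\Omega$, each $M$ is guaranteed to be removed by the time its preferred direction has been processed. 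Once the full syndrome is zero, the product of $\epsilon$ and the correction has trivial boundary and is supported inside a union of local regions; since no non-trivial logical $Z$ operator fits in a local region, the residual operator must be a stabilizer.

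The main obstacle I anticipate is the bookkeeping around direction changes: when a direction other than $\vec\omega^{\ast}$ is applied first, the syndrome of $M$ may be reshuffled within $\reg M$ rather than removed, potentially extending the effective chain one must bound in \eqref{eq:T*-bound} by the time $\vec\omega^{\ast}$ is reached. Verifying that $c_P\, c_D^8 Q^n$ still bounds the subsequent removal time, and that the constants $c_D=2$ and $c_P=1$ are indeed adequate for the rhombic dodecahedral lattice with the eight chosen sweep directions, requires a careful geometric check at the boundary vertices --- this is the most delicate part of the argument. The remaining pieces (chunk separation, support-property confinement, and the final ``no logical in a local region'' step) are essentially routine applications of the lemmas already established.
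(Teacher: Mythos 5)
Your overall route --- bound $\diam{\reg M}$ by $c_D^8 Q^n < L/2$ so that \cref{lem:trailing} applies, use \cref{lem:concom} to decouple components, remove each component via the removal property within $T_{max}$ steps of its good direction, and finish with the ``no logical operator in a local region'' observation --- is the same as the paper's, and most of the steps are carried out correctly. However, there is one genuine gap in the decoupling step. You justify independence of components by claiming that ``the causal regions of distinct connected components (at the same or higher levels) are pairwise disjoint'' because the separation $Q^{n+1}/3 = 2|\Omega| c_P c_D^8 Q^n$ exceeds $c_D^8 Q^n$. That comparison only accounts for the extent of $\reg M$; for two components to have disjoint causal regions you need the separation to exceed the sum of both extents. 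For a $Q^{n'}$-connected component $M'$ of $F_{n'}$ with $n' > n$, $\reg{M'}$ can have diameter up to $c_D^8 Q^{n'} \geq c_D^8 Q^{n+1}$, which is far larger than the guaranteed separation $Q^{n+1}/3$ from $M$ (since $c_D^8 = 256 \gg 1/3$). So $\reg{M'}$ can engulf $M$ entirely, and your support-property bootstrap --- which itself presupposes that the components evolve independently --- does not close.

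The paper resolves this with the \emph{propagation} property of \cref{lem:sr-props}, not the support property: during the $|\Omega|\, T^* \leq |\Omega| c_P c_D^8 Q^n = Q^{n+1}/6$ time steps needed to remove a level-$n$ component, the syndrome of any other chunk of $E_n$ can travel a distance of at most $Q^{n+1}/6$, while the level-$n$ syndrome stays within $\reg M$; since $Q^{n+1}/6 + c_D^8 Q^n < Q^{n+1}/3 < \dist{M}{E_n\setminus M}$, the two never meet before $M$ is gone. This is exactly why $Q$ carries the factor $6|\Omega| c_P c_D^8$ --- the constant is tuned to the removal \emph{time}, not to the causal-region diameter --- and why the propagation property is invoked at all. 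Your proposal never uses the propagation property, which is the tell that this step is missing. As a secondary point, the direction-cycling issue you flag at the end is real but is already handled by the machinery you cite: the composed causal region $\reg{\cdot}$ is defined as $\mathcal R_{\vec\omega_8}\circ\cdots\circ\mathcal R_{\vec\omega_1}$ precisely so that the support property confines the reshuffled syndrome for all eight directions, after which the monotone of \cref{eq:monotone-bound} is still bounded by $T^*$ when the good direction $\vec\omega^*$ is finally applied.
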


The proof of the above Lemma is essentially the same as the proof of Lemma~3 in~\cite{Kubica2019}, albeit with adjusted constants. We briefly sketch the proof here, for completeness.

Let $M$ be a $Q^n$-connected component of $F_n$. We chose the constant $m^* = \ceil{\log_Q (L/2 c_D^8)}$ such that $\reg M$ is contained in a local region of $\mathcal L$. One can verify that $\diam{\reg U} \leq c_D \diam U$ for all $U \subseteq \mathcal L_0$ and $\vec \omega \in \Omega$, with $c_D = 2$. Therefore 
\begin{equation}
    \diam{\reg M} 
    \leq c_D^8 \diam M
    \leq c_D^8 Q^n 
    < c_D^8 Q^{m^*} \leq \frac{L}{2}
\end{equation}

By the removal property of the sweep rule, $\sigma = \partial M$ will be removed in $|\Omega| \times T^*$ time steps, where $T^*$ is 
\begin{equation}
    T^{*} = \max_{\vec \omega \in \Omega} \max_{\cpath{\inf \reg \sigma}{\sup \reg \sigma}} |\cpath{\inf \reg \sigma}{\sup \reg \sigma}|.
    \label{eq:T*-bound-app}
\end{equation}
We use the properties of the rhombic dodecahedral lattice to bound $T^*$. For any two vertices $u,v \in \mathcal L_0$, and any sweep direction $\vec \omega \in \Omega$, $\max_{\cpath{u}{v}}|\cpath{u}{v}|\leq c_{P}\times\dist{u}{v}$, with $c_P=1$. Therefore, we can upper-bound $T^*$:
\begin{equation}
    T^* \leq |\Omega| c_P \diam{\reg M} 
    \leq c_P c_D^8 Q^n.
\end{equation}

We choose the constant $Q = 6 |\Omega| c_P c_D^8$ such that $M$ is removed independently from $E_i \setminus M$. This follows from \cref{lem:concom} and the propagation property of the rule. Finally, because $\reg M$ is contained in a local region of $\mathcal L$, the product of $M$ and its correction implements a trivial logical operator. Given the above, an inductive argument shows that all chunks up to level-$m^*$ are corrected by the sweep decoder. 

\subsection*{Probability of high-level chunks}
\label{subsec:chunk-supp}

The only remaining step in the proof of Theorem 1 is to show that the probability of the error $\epsilon$ containing level-$m^{*}$ chunks is exponentially suppressed in the size of the lattice. This can be accomplished using a percolation theory argument, as explained in~\cite{Kubica2019}. The outcome is the following bound on the probability of $\epsilon$ containing a level-$n$ (or higher) chunk:
\begin{equation}
    \Pr[\text{$\epsilon$ contains a level-$n$ chunk}]\leq \frac{|\mathcal L_0|}{\lambda^{2}} \left(\frac{p}{p_{\mathrm{th}}}\right)^{2^{n}},
    \label{eq:mstar-chunk-pr}
\end{equation}
where $\lambda=(2Q)^{3}c_{B}$. To specify $c_B$, we define a discrete ball of radius $r$ centred at a vertex $v$, $B_v(r)$, to be the set of all lattice elements a distance smaller than $r$ from $v$. The constant $c_B=8$ is set by the fact that for any ball $B_{v}(R)$ contained within a finite region of $\mathcal L^\infty$, there exists a cover
\begin{equation}
    \bigcup_{u\in U}B_{u}(r)\supset B_{v}(R),
    \label{eq:le1}
\end{equation}
consisting of balls of radius $r<R$ index by $U\subset \mathcal L^\infty$, such that 
\begin{equation}
    |U|\leq c_{B}(R/r)^{3}.
    \label{eq:le2}
\end{equation}   
The threshold error probability $p_{\mathrm{th}}$ is 
\begin{equation}
    p_{\mathrm{th}}=\left(\lambda^{2}\max_{v\in\tri{0}{\mathcal{L}}}|\St_{2}(v)|\right)^{-1},
    \label{eq:pth}
\end{equation}
where the $2$-star of $v$, $\St_{2}(v)=\{ f \in \mathcal L_2 : v \in f \}$. If we substitute $m^{*}=\ceil{\log_{Q}(L/2c_{D}^8)}$ into \cref{eq:mstar-chunk-pr}, we obtain:
\begin{equation}
    \Pr[\text{$\epsilon$ contains a level-$m^{*}$ chunk}]\leq \frac{|\mathcal L_0|}{\lambda^{2}} \left(\frac{p}{p_{\mathrm{th}}}\right)^{\beta_1 L^{\beta_2}},
    \label{eq:mstar-chunk-pr-L}
\end{equation}
where $\beta_1=1/(2 c_D^8)^{\beta_1}$ and $\beta_2=\log_{Q}2$. As $|\mathcal L_0| = \BigO{ L^3 }$, the probability of the error containing a level-$m^{*}$ (or higher) chunk is $\BigO{(p/p_{\mathrm{th}})^{\beta_1 L^{\beta_2}}}$. \qed

For rhombic dodecahedral lattices, the value of the error threshold given by \cref{eq:pth} is $p_{\mathrm{th}}\approx 10^{-30}$. This is many orders of magnitude smaller than the value we observe in simulations of $p_{\mathrm{th}} \approx 21.5\% $, which underlies the importance of using numerical simulations to estimate the error threshold of a decoder. 

\section*{Supplementary Note 2 \label{app:cubic}}

We applied the sweep decoder to toric codes defined on cubic lattices with open and periodic boundary conditions. For the case with open boundaries, we consider a family of toric codes constructed from the infinite cubic lattice. To construct a toric code with code distance $L$ and one encoded qubit, we take a $L \times L \times (L-1)$ sublattice of the infinite cubic lattice with vertices at integer coordinates $(x,y,z) \in [0, L] \times [0, L] \times [0, L-1]$. Then we associate qubits ($X$ stabilizer generators) with all faces (edges) except those in the $x=0,L$ or $y=0,L$ planes. Supplementary Figure~\ref{fig:cubic-L} illustrates the construction of the $L=3$ lattice. We use eight sweep directions $\Omega = \{(\pm 1, \pm 1, \pm 1)\}$, where each sweep direction points into the centre of the cubes. Supplementary Figure~\ref{subfig:cubic-sr} explains how the rule works for one of the sweep directions (the rest are the same by symmetry). 

We simulated the performance of the decoder for an error model with equal phase-flip and measurement error probabilities (i.e.\ $\alpha=1$ in Definition 2). We observe a sustainable error threshold of $p_{\mathrm{sus}}\approx 1.7\%$ for toric codes defined on lattices with and without boundaries; see Supplementary Figure~\ref{fig:cubic-sus}. We find that the optimal decoder parameters were essentially the same as those we described for the rhombic dodecahedral lattice (see Results of the main text). 

\begin{figure}
    \centering
    \subfloat[]{
        \centering
        \includegraphics[width=0.3\textwidth]{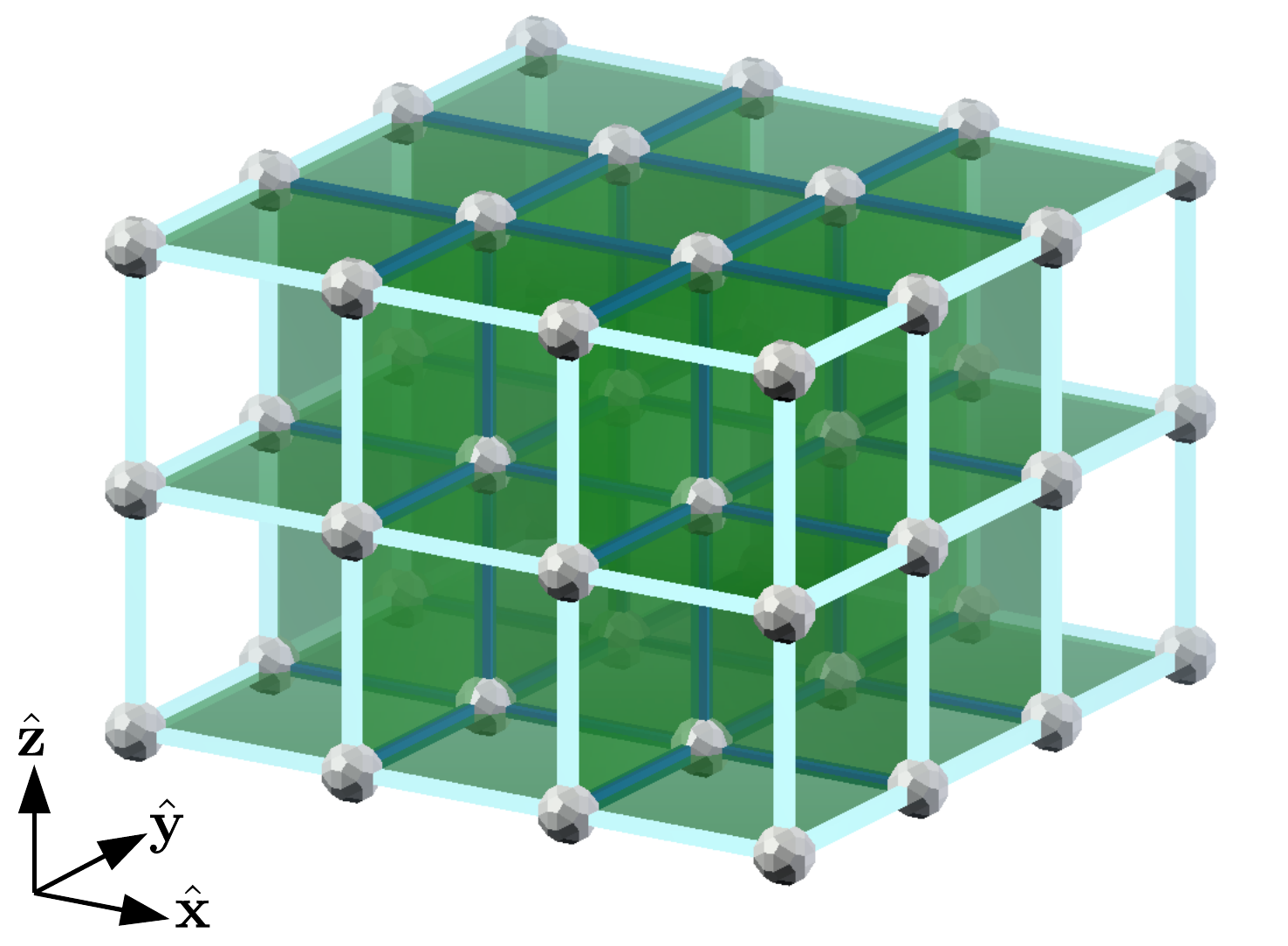}
    }
    \hfill
    \subfloat[]{
        \centering
        \includegraphics[width=0.3\textwidth]{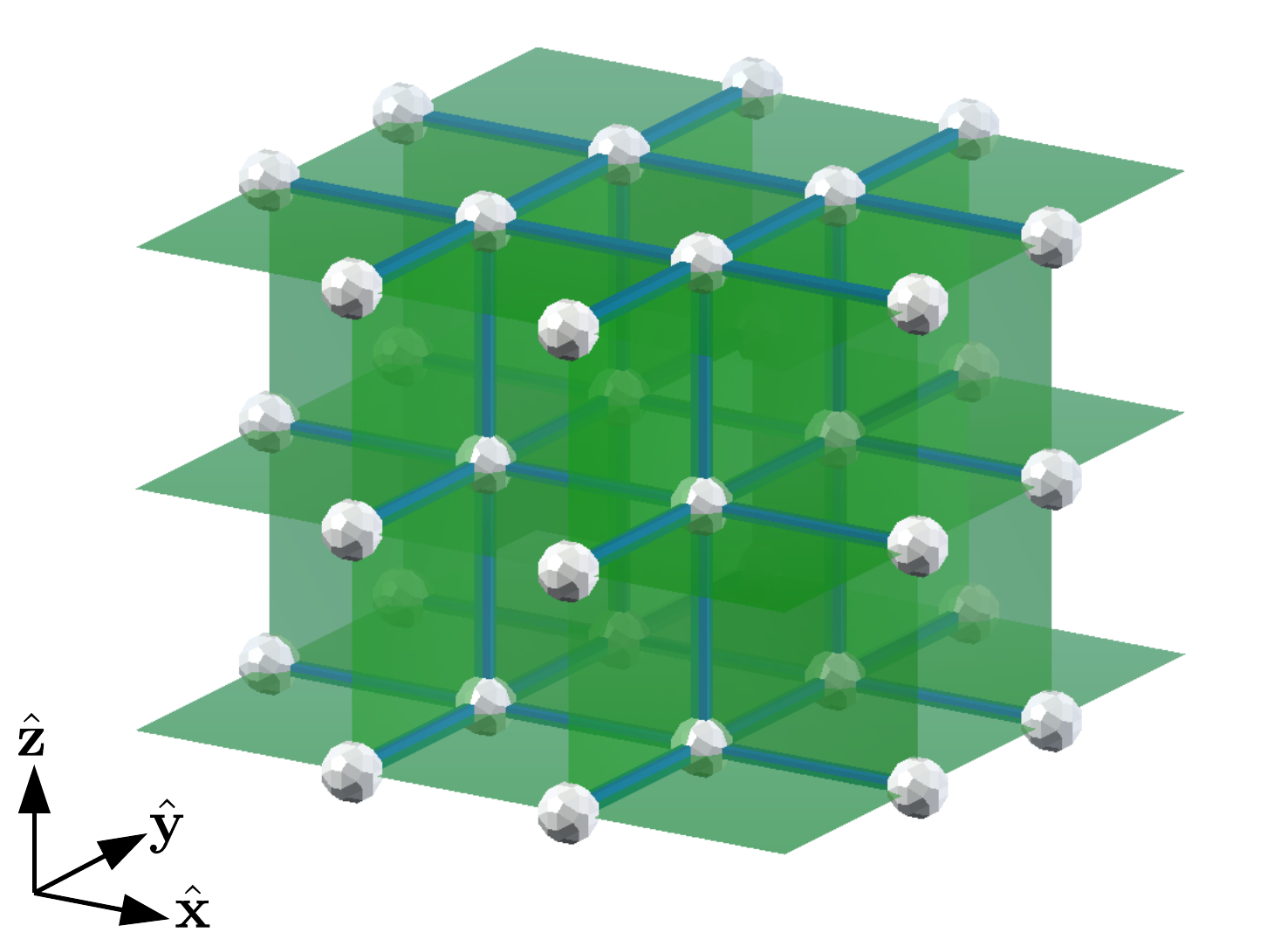}
    }
    \hfill
    \subfloat[]{
        \label{subfig:cubic-sr}
        \centering
        \includegraphics[width=0.2\textwidth]{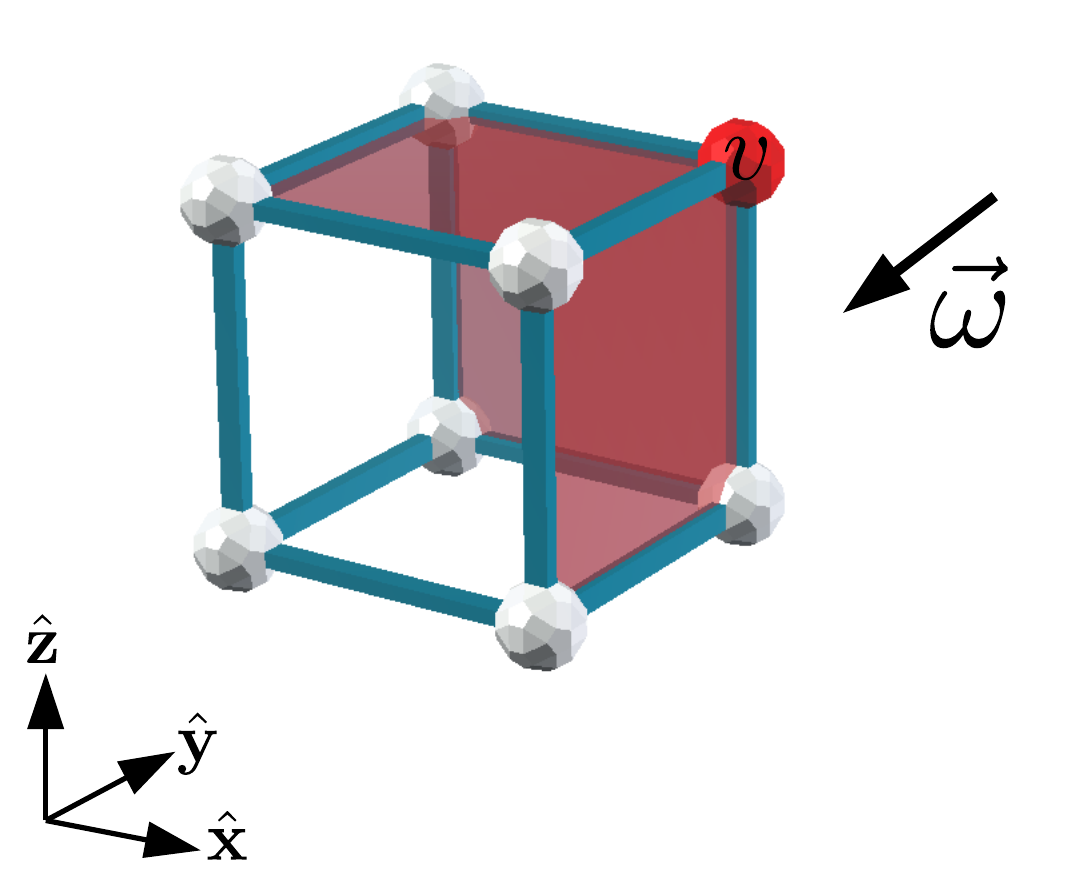}
    }
    \caption{(a) Constructing the $L=3$ cubic toric code lattice. We illustrate how the lattice (dark blue edges and green faces) is a sublattice of the infinite cubic lattice (represented by light blue edges). The $L=3$ cubic toric code lattice. We show all the vertices, edges and faces of the lattice. In particular, we note that some faces on the boundary have only two or three edges. (c) The action of the sweep rule at the vertices of the cubic lattice. The sweep direction $\vec \omega = -(1,1,1)$ points from the red vertex $v$ into the centre of the cube. The shaded red faces are the faces that can be returned by the rule from $v$.}
    \label{fig:cubic-L}
\end{figure}

\begin{figure}
    \centering
    \includegraphics[width=0.45\textwidth]{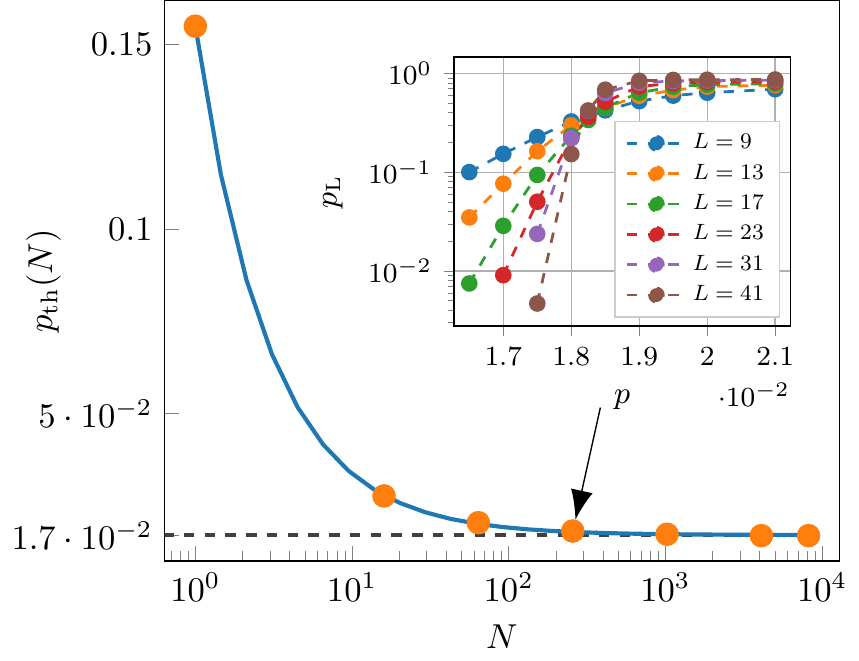}
    \caption{Sustainable threshold of the sweep decoder applied to the toric code on the cubic lattice (with boundaries). We plot $p_{\mathrm{th}}(N)$, the error threshold as a function of the number $N$ of error correction cycles, for an error model with $\alpha = q/p = 1$. The inset shows the data for $N=2^{10}$, where we use $10^4$ Monte Carlo samples for each point. We observe a sustainable threshold of $p_{\mathrm{sus}}\approx 1.7\%$, with $\gamma = 0.92$ (see Eq. (10) for the fit).}
    \label{fig:cubic-sus}
\end{figure}

\section*{Supplementary Note 3 \label{app:corr}}

We ran simulations to estimate the performance of the sweep decoder against correlated noise for rhombic dodecahedral lattices with boundaries. We used a simple error model where, at each time step, every pair of neighbouring qubits experiences an error with probability $p$, where the error is drawn randomly from $\{ZI,IZ,ZZ\}$. In addition, each stabilizer measurement outcome is flipped with probability $q=p$. To compare the performance of the decoder for this error model against the iid phase-flip error model described in Definition 2, we cannot simply use the same values of $p$, as the parameter has different meanings in each model. Instead, we use an effective error rate~\cite{Maskara2019} $p_{eff}$, which is the marginal probability that a given qubit experiences a phase-flip. In the iid phase-flip error model, $p_{eff}=p$, but in the correlated error model $p_{eff}=2p-8p^{2}/3+\BigO{p^{3}}$. Using the effective error rate as our parameter, we find that behaviour of $p_{\mathrm{th}}(N)$ for correlated noise is analogous to the iid phase-flip case, except with a lower sustainable threshold of $p_{\mathrm{sus}} \approx 0.8\%$. Supplementary Figure~\ref{fig:corr-sus} shows the data.

\begin{figure}
    \centering
    \includegraphics[width=0.45\textwidth]{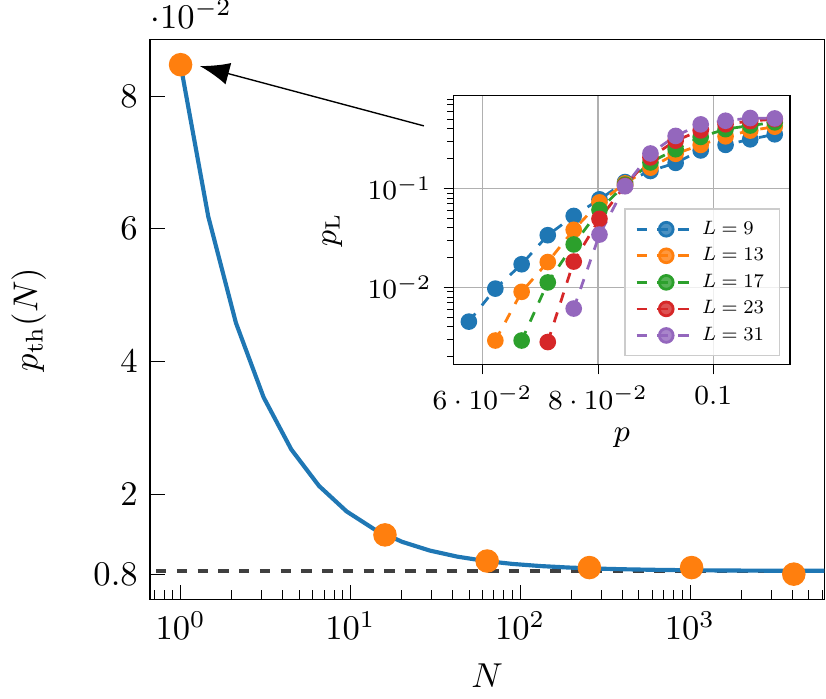}
    \caption{Sustainable threshold of the sweep decoder applied to the toric code on the rhombic dodecahedral lattice (with boundaries) against correlated errors. We plot $p_{\mathrm{th}}(N)$, the error threshold as a function of the number of error correction cycles. The inset shows the data for $N=1$, where we use $10^4$ Monte Carlo samples for each point. We observe a sustainable threshold of $p_{\mathrm{sus}}\approx 0.8\%$, with $\gamma = 0.95$ (see Eq. (10) for the fit).}
    \label{fig:corr-sus}
\end{figure}


\end{document}